\documentclass[10pt,a4paper,titling,fleqn]{article} 

\usepackage[paper=a4paper,left=25mm,right=25mm,top=25mm,bottom=25mm]{geometry}

\usepackage{amsthm,amsmath,amsfonts,amssymb}
\usepackage{graphicx,psfrag,epsf}
\usepackage{enumerate}
\usepackage[numbers]{natbib}
\usepackage{url}
\usepackage{hyperref} 
\hypersetup{colorlinks=true,allcolors=[rgb]{0,0,1}}
\usepackage{color,xcolor}
\usepackage{dsfont}
\usepackage{multirow}
\usepackage{array}
\usepackage{booktabs}
\emergencystretch=\maxdimen
\allowdisplaybreaks

\selectfont
\newtheorem{theorem}{Theorem}[section] 
\newtheorem{definition}[theorem]{Definition}
\newtheorem{lemma}[theorem]{Lemma}
\newtheorem{corollary}[theorem]{Corollary}
\newtheorem{remark}[theorem]{Remark}
\newtheorem{example}[theorem]{Example}

\DeclareSymbolFont{symbolsC}{U}{txsyc}{m}{n}
\DeclareMathSymbol{\Nearrow}{\mathrel}{symbolsC}{116}
\DeclareMathSymbol{\Searrow}{\mathrel}{symbolsC}{117}
\DeclareMathSymbol{\Swarrow}{\mathrel}{symbolsC}{119}

\counterwithin{figure}{section}
\counterwithin{table}{section}
\counterwithin{equation}{section} 

\newcommand{\bx}{\mathbf{x}}
\newcommand{\by}{\mathbf{y}}
\newcommand{\bz}{\mathbf{z}}
\newcommand{\bu}{\mathbf{u}}
\newcommand{\bv}{\mathbf{v}}

\newcommand{\bU}{\mathbf{U}}

\newcommand{\bX}{\mathbf{X}}
\newcommand{\bY}{\mathbf{Y}}
\newcommand{\bZ}{\mathbf{Z}}

\newcommand{\bbP}{\mathbb{P}}
\newcommand{\bbR}{\mathbb{R}}
\newcommand{\bbI}{[0,1]}

\newcommand{\bZERO}{\boldsymbol{0}}

\newcommand{\cN}{\mathcal{N}}

\newcommand{\de}{\mathrm{\,d}}

\newcommand{\eqd}{\stackrel{\mathrm{d}}=}

\newcommand{\PLOD}{{\rm PLOD}}
\newcommand{\LTD}{{\rm LTD}}
\newcommand{\RTI}{{\rm RTI}}
\newcommand{\SI}{{\rm SI}}
\newcommand{\LOSD}{{\rm SI}}

\newcommand{\LOPDS}{{\rm PDS}}

\newcommand{\CIS}{\rm {CIS}}
\newcommand{\CI}{\rm {CI}}
\newcommand{\mtp}{{\rm mtp_2}}
\newcommand{\mrr}{{\rm mrr_2}}
\newcommand{\MTP}{{\rm MTP_2}}

\newcommand{\LOMTP}{{\rm MTP_2}}

\newcommand{\cMTP}{{\rm cMTP_2}}

\newcommand{\cLOMTP}{{\rm cMTP_2}}       
\newcommand{\cLOMTPs}{{\rm cMTP_2^{\rm set}}}

\newcommand{\dMTP}{{\rm dMTP_2}}

\newcommand{\cLOsMTP}{{\rm cMTP_2}}

\newcommand{\SF}[1]{{\color{black} #1}}

\definecolor{foxred}{rgb}{0.7, 0.11, 0.11}
\newcommand{\Mail}[1]{{\color{foxred} #1}}

\DeclareMathOperator*{\Motimes}{\text{\raisebox{0.25ex}{\scalebox{0.8}{$\bigotimes$}}}}

\begin{document}

\title{\bf \texorpdfstring{\SF{An MTP$_2$ property for conditional distributions}}}

\author{%
	Sebastian Fuchs\footnote{University of Salzburg, Austria, Email: \Mail{sebastian.fuchs@plus.ac.at}}
	\qquad
	Yuping Wang\footnote{University of Salzburg, Austria, Email: \Mail{yuping.wang@plus.ac.at}} 
    \vspace{5mm}
}

\maketitle

\begin{abstract}
We introduce a new notion of positive dependence, namely multivariate total positivity of order two for conditional distributions, denoted \(\cMTP\), that explicitly incorporates the conditioning variables.
This property is weaker than multivariate total positivity of order two for densities, but stronger than multivariate total positivity of order two for distribution functions, stochastic monotonicity, and tail monotonicity. 
It thus provides a natural intermediate concept linking these classical notions of positive dependence.
We verify the \(\cMTP\) property for several distributions and copula families, including Archimedean copulas, and establish stability under Markov product transformations.
As a consequence, we derive comparison results for measures of directed dependence arising from Markov structures, including Chatterjee’s rank correlation and a related sensitivity measure.
\end{abstract}

\noindent%
{\it Keywords:} 
Conditional distribution,
Dependence property,
Directed dependence,
Markov product,
Total positivity

\section{Introduction}

Stochastic dependence among multiple random variables can be described in various ways. 
One common approach is to use measures of association, which are typically designed to capture specific forms of relationship, such as linear dependence (measured by Pearson correlation), monotone dependence (measured by Kendall’s tau and Spearman’s rho), or, more broadly, functional and directed dependence (measured by Chatterjee’s rank correlation \cite{chatterjee2020} and sensitivity measures such as the Sobol' index \cite{Sobol1993}).

A widely used alternative is to identify dependence patterns of varying strength, including positive lower orthant dependence, tail monotonicity, stochastic monotonicity, or multivariate total positivity of order two. These and related notions have been studied extensively \cite{karlin_1980, brindley_1972, fallat_total_2017, lehmann1966, sarkar_1969, nelsen_introduction_2007, joe_multivariate_1997, joe_dependence_2014, colangelo2005, fuchs2024JMVA_MT} because they can be highly informative in applications. 
For example, multivariate total positivity of order two for densities has important consequences for false discovery rate control in multiple testing procedures \cite{sarkar_2002} and for characterizing Markov structures \cite{fallat_total_2017}. Stochastic monotonicity, in turn, yields comparison results for Markov tree distributions \cite{ansari2025}. Moreover, tail monotonicity leads to comparison results for concordance measures such as Kendall's tau, Spearman's rho, and Spearman's footrule; see, for example, \cite{fredricks_relationship_2007} and Theorem \ref{Thm:FootruleRho} below, which establishes a new inequality between Spearman's rho and Spearman's footrule.

While (multivariate) total positivity of order two for distribution functions (\(\MTP\), for short) and densities (\(\dMTP\), for short) has been well studied \cite{karlin_1980, fallat_total_2017, muller2002}, the corresponding notion for conditional distributions (\(\cMTP\), for short), with the conditioning variables explicitly taken into account, has so far been examined only in the bivariate setting under the assumption of continuous marginal distribution functions \cite{fuchs_total_2023, caperaa1990,guillem2000}. 
For this particular setting, its relation to other notions of positive dependence is illustrated in \cite[Fig.~2]{fuchs_total_2023}.

In this paper, we extend the notion of total positivity of order two for conditional distributions to random vectors of dimension \(d > 2\): 
A function \(h: \bbR^d \to \bbR, d \geq 2,\) is said to be \emph{multivariate totally positive of order two} ($\mtp$, for short) if the inequality 
\begin{align} \label{MTP2}
  h(\bx) \, h(\by)
  & \leq h(\bx \land \by) \, h(\bx \lor \by)  
\end{align}
holds for all \(\bx,\by \in \bbR^d\), 
where $\bx \land \by$ and $\bx \lor \by$ denote the component-wise minimum and maximum, respectively.
If $d=2$, $h$ is usually referred to as \emph{totally positive of order two}.

The random vector \(\bX\) is said to be \cite{karlin_1980, nelsen_introduction_2007,  muller2002}
\begin{itemize}
\item 
\(\LOMTP(\bX)\) if the distribution function of \(\bX\), viewed as the mapping \(\bx \mapsto \bbP( \bX \leq \bx)\), is \(\mtp\);
\item 
\(\dMTP(\bX)\) if \(\bX\) has a density \(f_{\bX}\) with respect to the product probability measure \(\Motimes_{i=1}^d \bbP^{X_i}\) and \(f_{\bX}\) is \(\mtp\).
\end{itemize}

In Definition \ref{Def:cMTP2AB} and throughout the paper, we assume the existence of the regular conditional probabilities \(\bbP(\bZ_2 \leq \bz_2 \,|\, \bZ_1 = \bz_1)\) for all \(\bz_1 \in \bbR^{k}\) and all \(\bz_2 \in \bbR^{l}\), \(k,l \in \mathbb{N}\), and we refer to \cite{kallenberg_foundations_2021, klenke_probability_2007} for more background on regular conditional distributions.
Whenever a conditional distribution or density is used in an $\mtp$ statement, the assertion is understood to mean that there exist versions of the regular conditional distribution and of the density for which the displayed map satisfies the stated \(\mtp\) inequality.

\begin{definition}[An mtp$_2$ property for conditional distributions] \label{Def:cMTP2AB}\label{Def:cMTP2}~~
\begin{enumerate}[(1)]
\item Let \(A\) and \(B\) be two non-empty sets that form a partition of \(\{1,2,\dots,d\}\).
The random vector \(\bX = (\bX_A,\bX_B)\) is said to be \(\cLOMTP(\bX_B|\bX_A)\) if \(\bX_A\) has a density \(f_{\bX_A}\) with respect to the product measure 
\(\Motimes_{i \in A} \bbP^{X_i}\) and the mapping 
\begin{align}\label{Def:cMTP2AB:Eq}
   (\bx_A,\bx_B) \mapsto \bbP(\bX_B \leq \bx_B \,|\, \bX_A = \bx_A) \, f_{\bX_A} (\bx_A)
\end{align} 
is $\mtp$. 
\item The random vector \(\bX\) is said to be \(\cLOsMTP(\bX)\) 
if \(\bX\) is \(\cLOMTP(\bX_B|\bX_A)\) for all pairs of non-empty sets \(A, B\) that partition \(\{1,2,\dots,d\}\).
\end{enumerate}
\end{definition}
We emphasize that \(\cLOMTP(\bX_B|\bX_A)\) in \eqref{Def:cMTP2AB:Eq} does not require \(\bX\) to admit a density nor does it require the existence of a conditional density of \(\bX_B\) given \(\bX_A\).

The \(\mtp\) property for conditional distributions introduced in Definition \ref{Def:cMTP2AB} is weaker than total positivity of order two for densities (\(\dMTP\)), but stronger than the corresponding property for distribution functions (\(\MTP\)), and stronger than stochastic monotonicity (\(\LOSD\)); see Figures \ref{Fig:MTP2:Seq} and \ref{Fig:DepProp_1}. 
\SF{It therefore serves as a natural intermediate concept interpolating between \(\LOMTP(\bX)\) and \(\dMTP(\bX)\) and linking the different notions of positive dependence.}
\begin{figure}[h] 
\begin{center}
$$
\begin{array}{cccccccccc}
\textrm{Independence} &&&&&& \textrm{Comonotonicity} &&&
\\
\Downarrow &&&&&& \Downarrow &&&
\\
\dMTP(\bX) & \Longrightarrow  
& \underset{|B|=1, \,|A|=d-1}{\cLOMTP(\bX_B|\bX_A)} & \Longrightarrow 
& \!\!\!\dots\!\!\! & \Longrightarrow & \underset{|B|=d-1, \,|A|=1}{\cLOMTP(\bX_B|\bX_A)} & \Longrightarrow & \LOMTP(\bX) 
\end{array}
$$
\end{center}
\caption{Interrelations among the different notions of multivariate total positivity of order two.
Arrows pointing from left to right indicate strict implications; that is, the property on the left is strictly stronger than, and therefore implies, the property on the right. 
The reverse directions in Figure \ref{Fig:MTP2:Seq} generally fail.}
\label{Fig:MTP2:Seq}
\end{figure}

We present conditions under which the \(\cMTP\) property holds for specific classes of distributions and copulas, focusing on binary distributions and Archimedean copulas.
We show that, for binary distributions, \(\cMTP\) is equivalent to \(\dMTP\), whereas for Archimedean copulas, \(\cMTP\) is equivalent to a suitable notion of stochastic monotonicity. 
Moreover, as in the bivariate case \cite{fuchs_total_2023}, these dependence properties for Archimedean copulas admit a complete characterization in terms of the Archimedean generator.\pagebreak

For practical applications, it is important to understand how positive dependence properties are transferred from random vectors to their associated Markov structures. We complement the well-known stability results for \(\dMTP\) \cite{fallat_total_2017} and stochastic monotonicity \cite{joe1994,Siburg-2021} by showing that \(\cMTP\) is likewise preserved under Markov product transformations. 
In addition, because measures of directed dependence such as Chatterjee's rank correlation \(\xi\) \cite{chatterjee2020} and the Sobol' index, also referred to as the fraction of explained variance \(R^2\), admit representations in terms of Markov structures \cite{fuchs2024JMVA}, these stability results directly imply comparison results for \(\xi\) and \(R^2\) based on the dependence properties of the underlying random vector.

\bigskip
The paper is organized as follows: 
Section \ref{Sec:MTP2} begins by studying positive dependence properties relative to a fixed partition of the random variables into conditioning and conditioned subsets, and examines the relationships among these properties.
It then considers positive dependence properties for random vectors without imposing a prescribed partition, providing a detailed analysis of the implications among the resulting concepts.
Section \ref{Sec:Ex} verifies the \(\cMTP\) property for several multivariate distributions and copula families.
Section \ref{Sec:markovproduct} establishes stability results for Markov structures and discusses their implications for comparison results involving measures of directed dependence.
Auxiliary results are collected in the Appendix (Section \ref{Sec:App}).

Throughout the paper, we focus on lower orthant dependence properties, that is, properties formulated in terms of (conditional) distribution functions. Analogous dependence properties can be defined in terms of survival functions, leading to corresponding notions of upper orthant dependence.

\section{Multivariate total positivity of order two for conditional distributions} 
\label{Sec:MTP2}

Let \(A\) and \(B\) be two non-empty sets that form a partition of \(\{1,2,\dots,d\}\), where \(d \geq 2\). For a \(d\)-dimensional random vector \(\bX = (X_1, \dots, X_d)\), we write \(\bX_A\) and \(\bX_B\) for the subvectors of \(\bX\) consisting of the coordinates indexed by $A$ and $B$, respectively.


\subsection{A new class of positive dependence properties}
\label{Sec:MTP2:A}

Multivariate total positivity of order two for densities and distribution functions are well established concepts in the literature \cite{karlin_1980, fallat_total_2017}. 
By contrast, to the best of our knowledge, multivariate total positivity of order two for conditional distributions that explicitly incorporate the conditioning variables has so far been investigated only in the bivariate case \cite{fuchs_total_2023, caperaa1990,guillem2000}. 
The definition of \(\cLOMTP(\bX_B|\bX_A)\) in \eqref{Def:cMTP2AB:Eq} is in line with the existing literature as it is a genuine generalization of the bivariate concept studied in \cite{fuchs_total_2023}.

\begin{remark} \label{Rem.cMTP2:Ex}
\begin{enumerate}[{\rm (1)}]
\item \label{Rem.cMTP2:Ex.1}
\(\cLOMTP(\bX_B|\bX_A)\) implies that the conditional distribution function, viewed as a mapping 
\begin{align}\label{Def:cMTP2:Eq2}
    \bx_B \mapsto \bbP(\bX_B \leq \bx_B \,|\, \bX_A = \bx_A) 
\end{align} 
is \(\mtp\) for almost all \(\bx_A\).
However, because \(\cLOMTP(\bX_B|\bX_A)\) in \eqref{Def:cMTP2AB:Eq} also incorporates the conditioning variables indexed by \(A\), \(\cLOMTP(\bX_B|\bX_A)\) is strictly stronger than \eqref{Def:cMTP2:Eq2}. 

\item 
Recall that the standard definition of $\mtp$ for densities requires the existence of a density with respect to the product measure; see, for example, \cite{karlin_1980, muller2002}.
When translated to the setting of conditional distributions in \eqref{Def:cMTP2AB:Eq}, this corresponds to assuming that \(\bX_A\) admits a density \(f_{\bX_A}\) with respect to \(\Motimes_{i \in A} \bbP^{X_i}\).
\SF{We refer to Section \ref{Sec:Dis} for a discussion of density-free analogues of \(\dMTP\) and \(\cLOMTP\).}

\item \label{Rem.cMTP2:Ex:2} 
If \(|A|=1\), then 
\(\cLOMTP(\bX_B|\bX_A)\) if and only if 
the mapping
\begin{align}\label{Def:cMTP2:Eq}
   (x,\bx_B) \mapsto \bbP(\bX_B \leq \bx_B \,|\, \bX_A = x) 
\end{align} 
is $\mtp$. 
\SF{Although every random vector \(\bX\) with independent components, as well as every random vector \(\bX\) with comonotone components, satisfies a multivariate analogue of \eqref{Def:cMTP2:Eq}, i.e.}
\begin{align}\label{Def:cMTP2:Eq3}
   (\bx_A,\bx_B) \mapsto \bbP(\bX_B \leq \bx_B \,|\, \bX_A = \bx_A) 
\end{align} 
is $\mtp$ for any choice of \(A\) and \(B\) forming a partition of \(\{1,\dots,d\}\), this concept is, in general, too restrictive. 
In particular, one would like an \(\mtp\) property of the conditional distribution function of \(\bX_B\) given \(\bX_A\) to follow from \(\dMTP(\bX)\).
This implication, however, fails for \eqref{Def:cMTP2:Eq3} and arbitrary choices of \(A\), as demonstrated in Example \ref{Ex:GaussianMulti}.
In contrast, we show in Theorem \ref{Thm.MTP2} below that \(\dMTP(\bX)\) implies \(\cLOMTP(\bX_B|\bX_A)\) for any choice of \(A\) and \(B\) forming a partition of \(\{1,\dots,d\}\).
\end{enumerate}
\end{remark}

\SF{The following example highlights the importance of including the density \(f_{\bX_A}\) as a required factor in the definition of \(\cMTP\) in \eqref{Def:cMTP2AB:Eq}.}

\begin{example} \label{Ex:GaussianMulti}
Let \(\bX = (X_1,X_2, X_3) \sim \cN(\bZERO,\Sigma)\) be a \(3\)--dimensional normally distributed random vector with covariance matrix
\begin{align*}
\Sigma=\begin{bmatrix}
1 & 0.5 & 0.2\\
0.5 & 1 & 0.3\\
0.2 & 0.3 & 1
\end{bmatrix}.
\end{align*}
Then, for \(A=\{1,2\}\) and \(B=\{3\}\), the map in \eqref{Def:cMTP2:Eq3} is not \(\mtp\), but \(\dMTP(\bX)\) holds.
\begin{enumerate}[{\rm (1)}]
\item 
Consider the partition \(A = \{1,2\}\) and \(B = \{3\}\). 
According to \cite[Corollary 5]{Cambanis1981}, the conditional distribution of \(X_B\) given \(\bX_A = \bx_A\) is again normally distributed with mean vector $\Sigma_{AB}^T \Sigma_{AA}^{-1} \, \bx_A $ and covariance matrix $\Sigma_{BB} -\Sigma_{AB}^{T}\Sigma_{AA}^{-1}\Sigma_{AB}$, where \(\Sigma _{BB} = 1\), \(\Sigma _{BA} = (0.2,0.3)\), and \(\Sigma _{AA} = \begin{bmatrix}
1 & 0.5\\
0.5 & 1
\end{bmatrix}\).
Choosing \((\bx_A,x_B) = (-1,1,1)\) and \((\by_A,y_B) = (1,-1,1)\), straightforward calculation  yields
\begin{align*}
  &\bbP(X_B \leq x_B \,|\, \bX_A = \bx_A)
   = \bbP(X_3 \leq 1 \,|\, X_1 = -1, X_2 = 1) 
  \approx  0.7996,
  \\
 & \bbP(X_B \leq y_B \,|\, \bX_A = \by_A)
   = \bbP(X_3 \leq 1 \,|\, X_1 = 1, X_2 = -1) 
  \approx  0.8962,
  \\ 
  &\bbP(X_B \leq x_B \wedge y_B \,|\, \bX_A = \bx_A \wedge \by_A)
   = \bbP(X_3 \leq 1 \,|\, X_1 =-1, X_2 = -1) 
  \approx  0.9193, 
  \\
  &\bbP(X_B \leq x_B \lor y_B \,|\, \bX_A = \bx_A \lor \by_A)
  = \bbP(X_3 \leq 1 \,|\, X_1 = 1, X_2 = 1)  
  \approx  0.7581,
\end{align*}
and hence 
\begin{align*}
& \bbP(X_B \leq x_B \,|\, \bX_A = \bx_A) \, 
\bbP(X_B \leq y_B \,|\, \bX_A = \by_A)
   = 0.7166 
\\
&  > 0.6969 
   = \bbP(X_B \leq x_B \land y_B \,|\, \bX_A = \bx_A \land \by_A) \,
\bbP(X_B \leq x_B \lor y_B \,|\, \bX_A = \bx_A \lor \by_A) \,.
\end{align*}
In other words, \eqref{Def:cMTP2:Eq3} does not hold.

\item
Moreover, \(\Sigma^{-1}\) is an \(M\)-matrix \cite{fallat_total_2017, karlin_m-matrices_1983}, that is, \(\Sigma^{-1}_{ij} \leq 0\) for all \(i \neq j\), \emph{and} \(\Sigma^{-1}_{ii} > 0\) for all \(i \in \{1,2,3\}\).
It is well-known that \(\dMTP(\bX)\) holds if and only if \(\Sigma^{-1}\) is an \(M\)-matrix; see \cite[Example 3.1]{karlin_1980} or \cite[Theorem 3]{Ru1981}.
\end{enumerate}
\end{example}

We now establish a replacement property, which will be useful throughout and allows the positive dependence properties defined in \eqref{Def:cMTP2AB:Eq} to be ordered by strength.

\begin{theorem} \label{cMTP2:Seq}
Let \(A\) and \(B\) be two non-empty sets that form a partition of \(\{1,2,\dots,d\}\) with \(|A| \geq 2\).
Then, for every \(k \in A\),
\begin{align*}
  \cLOMTP(\bX_B|\bX_A) \quad \textrm{ implies } \quad \cLOMTP(\bX_{B \cup \{k\}}|\bX_{A \backslash \{k\}}).
\end{align*}
\end{theorem}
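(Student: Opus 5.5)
Write \(A' = A\setminus\{k\}\) and \(B' = B\cup\{k\}\), and put
\(g(\bx_A,\bx_B) := \bbP(\bX_B \le \bx_B \,|\, \bX_A=\bx_A)\, f_{\bX_A}(\bx_A)\), which is \(\mtp\) by hypothesis. The plan is to build the corresponding function for the pair \((A',B')\) by integrating \(g\) over the variable \(x_k\) from below, and then to use the fact that integration preserves the \(\mtp\) property.

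\emph{Step 1 (the density of \(\bX_{A'}\)).} Since \(\bX_A\) has density \(f_{\bX_A}\) with respect to \(\Motimes_{i\in A}\bbP^{X_i}\), the subvector \(\bX_{A'}\) has density
\[
f_{\bX_{A'}}(\bx_{A'}) = \int f_{\bX_A}(\bx_{A'},t)\,\de\bbP^{X_k}(t)
\]
with respect to \(\Motimes_{i\in A'}\bbP^{X_i}\). This follows from Fubini's theorem.

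\emph{Step 2 (disintegration).} For Borel sets \(C\subseteq\bbR^{|A'|}\) we compute
\(\bbP(\bX_{A'}\in C,\ X_k\le x_k,\ \bX_B\le\bx_B)\) by conditioning on \(\bX_A\). This gives
\[
\int_C \int_{(-\infty,x_k]} \bbP(\bX_B\le \bx_B\,|\,\bX_A=(\bx_{A'},t))\, f_{\bX_A}(\bx_{A'},t)\,\de\bbP^{X_k}(t)\,\de\Motimes_{i\in A'}\bbP^{X_i}(\bx_{A'}).
\]
Hence a version of the regular conditional distribution satisfies
\[
\bbP(\bX_{B'}\le \bx_{B'}\,|\,\bX_{A'}=\bx_{A'})\,f_{\bX_{A'}}(\bx_{A'}) = \int g(\bx_{A'},t,\bx_B)\,\mathds{1}_{\{t\le x_k\}}\,\de\bbP^{X_k}(t)
\]
for almost every \(\bx_{A'}\). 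Where \(f_{\bX_{A'}}\) vanishes, we define the conditional law arbitrarily; the left-hand side is then \(0\) for this choice, and the right-hand side also vanishes there. We then \emph{define} the version so that the identity holds everywhere. Some care is needed to make sure the right-hand side, divided by \(f_{\bX_{A'}}\), is a distribution function in \(\bx_{B'}\) for every \(\bx_{A'}\). It is monotone and right-continuous by monotone convergence, and its limits are correct by the disintegration. On the residual null set we modify it to a fixed distribution function, but only where \(f_{\bX_{A'}}=0\), so that the product is unaffected.

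\emph{Step 3 (\(\mtp\) of the integral).} The integrand
\[
(\bx_{A'},t,\bx_B,x_k)\mapsto g(\bx_{A'},t,\bx_B)\,\mathds{1}_{\{t\le x_k\}}
\]
is a product of \(\mtp\) functions on \(\bbR^{d+1}\), since \(\mathds{1}_{\{t\le x_k\}}\) is \(\mtp\) in \((t,x_k)\). A product of nonnegative \(\mtp\) functions is again \(\mtp\). Integrating out \(t\) against the one-dimensional measure \(\bbP^{X_k}\) preserves \(\mtp\) by the Karlin--Rinott marginalization theorem \cite{karlin_1980}, which holds for arbitrary product (here one-dimensional) \(\sigma\)-finite measures. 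This gives \(\cLOMTP(\bX_{B'}|\bX_{A'})\).

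\emph{Main obstacle.} The main difficulty is Step 2: choosing versions of the conditional distribution and the densities so that the \(\mtp\) inequality holds pointwise and not just almost everywhere. Per the paper's convention, we start from versions for which \(g\) is \(\mtp\) everywhere. The right-hand side of the identity in Step 2 is then a well-defined \(\mtp\) function everywhere, and we only need to show it equals the required product for some version, which is handled as described in Step 2. A secondary check is that the marginalization lemma applies when the measure \(\bbP^{X_k}\) has atoms. For a one-dimensional integration this is covered by the standard proof via the four-function (Ahlswede--Daykin) inequality, which the paper can cite or place in the Appendix.
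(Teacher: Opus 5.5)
Your proposal is correct and follows essentially the same route as the paper. Both use the same formula for the density of \(\bX_{A'}\), the same disintegration identity writing the \((A',B')\)-function as \(\int_{(-\infty,x_k]} g(\bx_{A'},t,\bx_B)\de\bbP^{X_k}(t)\), and the same Karlin--Rinott closure properties of \(\mtp\) functions. The differences are only in presentation. The paper packages your Step~3 as Lemma~\ref{MTP:Help1}, namely composition with the \(\mtp\) kernel \(\mathds{1}_{(-\infty,\by]}(\bz)\) via \cite[Proposition 3.4]{karlin_1980}, which is exactly your product-plus-marginalization argument. You are also more explicit than the paper about choosing versions so that the identity holds pointwise.
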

\begin{proof}
Set \(A' := A \backslash \{k\}\) and \(B' := B \cup \{k\}\).
\(\cLOMTP(\bX_B|\bX_A)\) implies that \(\bX_{A'}\) has \(\Motimes_{i \in A'} \bbP^{X_i}\)--density 
\begin{align*}
    f_{\bX_{A'}} (\bx_{A'})
    & = \int_{\bbR} f_{\bX_{A}} (\bx_{A'},z_{k}) \de \bbP^{X_{k}}(z_{k})\,,
\end{align*}
and disintegration yields
\begin{align*}
  & \bbP(\bX_{B'} \leq \bx_{B'} \,|\, \bX_{A'} = \bx_{A'}) \, f_{\bX_{A'}} (\bx_{A'})
  \\
  & = \bbP(X_k \leq x_k, \bX_{B} \leq \bx_{B} \,|\, \bX_{A'} = \bx_{A'}) \, f_{\bX_{A'}} (\bx_{A'})
  \\
  & = \int_{(-\infty,x_k]} \bbP(\bX_{B} \leq \bx_{B} \,|\, \bX_{A'} = \bx_{A'}, X_{k} = z_{k}) \, f_{\bX_{A}} (\bx_{A'},z_{k}) \de \bbP^{X_{k}} (z_{k})
  \\*
  & = \int_{(-\infty,x_k]} \bbP(\bX_{B} \leq \bx_{B} \,|\, \bX_{A} = (\bx_{A'},z_k)) \, f_{\bX_{A}} (\bx_{A'},z_k) \de \bbP^{X_{k}} (z_{k})\,.
\end{align*}
Since the mapping \((\bx_{A},\bx_B) \mapsto \bbP(\bX_{B} \leq \bx_{B} \,|\, \bX_{A} = \bx_{A}) \, f_{\bX_{A}} (\bx_{A})\) is $\mtp$ by assumption, \(\cLOMTP(\bX_{B'}|\bX_{A'})\) now follows from Lemma \ref{MTP:Help1}. 
\end{proof}

As a consequence of Theorem \ref{cMTP2:Seq}, the \(\cLOMTP(\bX_B|\bX_A)\) property is closed under marginalization in both \(A\) and \(B\).

\begin{corollary}(Closure under marginalization)~~ \label{Cor.cMTP2:Margins}
\(\cLOMTP(\bX_B|\bX_A)\) implies \(\cLOMTP(\bX_{B'}|\bX_{A'})\) for any choice of \(A' \subseteq A\) and \(B' \subseteq B\) with \(A', B'\) non-empty.
\end{corollary}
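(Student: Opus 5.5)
The plan is to reduce the corollary to two elementary operations: shrinking \(B\) with \(A\) fixed, and moving one index from \(A\) to \(B\) via Theorem~\ref{cMTP2:Seq}, followed by another shrinking of \(B\).

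\emph{Step 1 (marginalizing within \(B\)).} Suppose \(\cLOMTP(\bX_B|\bX_A)\) holds and let \(B'' \subseteq B\) be non-empty. For \(\bx_{B''}\) fixed, letting the coordinates \(x_j\), \(j \in B\setminus B''\), tend to \(+\infty\) gives
\[
\bbP(\bX_B \leq \bx_B \,|\, \bX_A = \bx_A) \to \bbP(\bX_{B''} \leq \bx_{B''} \,|\, \bX_A = \bx_A)
\]
by continuity from below of the (regular) conditional probability. The density factor \(f_{\bX_A}(\bx_A)\) is unaffected.

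The \(\mtp\) inequality \eqref{MTP2} is applied with \(\bx, \by\) whose \(B\setminus B''\) coordinates are all set equal to a common value \(t\). Then the same holds for \(\bx \land \by\) and \(\bx \lor \by\), and letting \(t \to \infty\) preserves the inequality. Hence \((\bx_A,\bx_{B''}) \mapsto \bbP(\bX_{B''} \leq \bx_{B''} \,|\, \bX_A=\bx_A)\, f_{\bX_A}(\bx_A)\) is \(\mtp\), i.e.\ \(\cLOMTP(\bX_{B''}|\bX_A)\).

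One caveat concerns versions: the \(\mtp\) statement is for some version of the conditional distribution. Taking the limit in that version, which is a regular conditional distribution, produces a valid version of the marginal conditional distribution, so this causes no problem.

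\emph{Step 2 (removing an index from \(A\)).} For \(k \in A\setminus A'\) we have \(|A| \geq 2\), since \(A'\) is non-empty. Theorem~\ref{cMTP2:Seq} then gives \(\cLOMTP(\bX_{B\cup\{k\}}|\bX_{A\setminus\{k\}})\). Applying Step 1 with \(B\) in place of \(B''\) removes \(k\) again, yielding \(\cLOMTP(\bX_B|\bX_{A\setminus\{k\}})\).

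To stay inside the framework, note that the properties are defined for partitions of the full index set. I would therefore phrase everything for the subvector \((\bX_{A}, \bX_{B''})\). This is legitimate because the definition only involves that subvector's law, and \(\bX_{A\setminus\{k\}}\) inherits a density by integrating out \(z_k\), as in the proof of Theorem~\ref{cMTP2:Seq}.

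\emph{Step 3 (induction).} I would iterate Step 2 over the elements of \(A \setminus A'\) one by one, which is possible since \(A'\) is non-empty and so \(|A| \geq 2\) at each stage. Then Step 1 is applied once to pass from \(B\) to \(B'\).

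The only genuinely delicate point is Step 1's limit argument and the bookkeeping about partitions of subvectors. I expect the monotone limit to go through directly, since each factor converges pointwise and the inequality is non-strict.
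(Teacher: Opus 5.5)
Your proposal is correct and matches the paper's proof. Closure in \(B\) comes from sending the discarded coordinates to \(+\infty\), which the paper calls clear from the definition, and closure in \(A\) comes from moving one index at a time from \(A\) to \(B\) via Theorem~\ref{cMTP2:Seq} and then marginalizing it out again; your remarks on versions and on reading the property for subvectors spell out what the paper leaves implicit.
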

\begin{proof}
It is clear from the definition that \(\cLOMTP(\bX_B|\bX_A)\) is closed under marginalization
in \(B\).
Closure under marginalization in \(A\) directly follows from Theorem \ref{cMTP2:Seq} and the corresponding property in \(B\).
\end{proof}

We now establish the relationship between \(\cLOMTP(\bX_B|\bX_A)\) and the other notions of multivariate total positivity of order two.

\begin{theorem} \label{Thm.MTP2}
Let \(A\) and \(B\) be two non-empty sets that form a partition of \(\{1,2,\dots,d\}\).
Then
\begin{enumerate}[{\rm (1)}]
\item \label{Thm.MTP2:3} \(\dMTP(\bX)\) implies \(\cLOMTP(\bX_B|\bX_A)\).
\item \label{Thm.MTP2:1} \(\cLOMTP(\bX_B|\bX_A)\) implies \(\LOMTP(\bX)\).
\end{enumerate}
\end{theorem}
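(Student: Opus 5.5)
Both implications reduce to the integration lemma already used in the proof of Theorem \ref{cMTP2:Seq} (Lemma \ref{MTP:Help1}). That lemma says, roughly: if $(\bx,z)\mapsto g(\bx,z)$ is $\mtp$, then $(\bx,x_k)\mapsto \int_{(-\infty,x_k]} g(\bx,z)\de \mu(z)$ is $\mtp$. We also use that products of $\mtp$ functions are $\mtp$, and that marginalizing $\mtp$ densities preserves $\mtp$.

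For part (1), assume $\dMTP(\bX)$ with density $f_{\bX}$ with respect to $\Motimes_{i=1}^d \bbP^{X_i}$. First, $\bX_A$ has density
\[
f_{\bX_A}(\bx_A)=\int f_{\bX}(\bx_A,\bz_B)\de \Motimes_{i\in B}\bbP^{X_i}(\bz_B),
\]
and a version of the conditional distribution satisfies
\[
\bbP(\bX_B\le \bx_B\,|\,\bX_A=\bx_A)\,f_{\bX_A}(\bx_A)=\int_{(-\infty,\bx_B]} f_{\bX}(\bx_A,\bz_B)\de \Motimes_{i\in B}\bbP^{X_i}(\bz_B).
\]
On the null set where $f_{\bX_A}=0$ both sides vanish, so the choice of version there is harmless. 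It therefore suffices to show that the right-hand side is $\mtp$ in $(\bx_A,\bx_B)$. We integrate out the coordinates of $B$ one at a time. At each step Lemma \ref{MTP:Help1} is applied to the current $\mtp$ function, with one coordinate $z_j$, $j\in B$, integrated over $(-\infty,x_j]$ against $\bbP^{X_j}$; the integral is written as a function of the already-present variables together with the new upper limit $x_j$. Induction over $|B|$ then gives the claim. The only care needed is that the intermediate functions are $\mtp$ in all variables simultaneously, both the $\bx_A$, the remaining $z$'s, and the already introduced upper limits; the lemma as stated handles exactly this.

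For part (2), assume $\cLOMTP(\bX_B|\bX_A)$. Then
\[
\bbP(\bX\le\bx)=\int_{(-\infty,\bx_A]} \bbP(\bX_B\le \bx_B\,|\,\bX_A=\bz_A)\,f_{\bX_A}(\bz_A)\de \Motimes_{i\in A}\bbP^{X_i}(\bz_A).
\]
The integrand is $\mtp$ in $(\bz_A,\bx_B)$ by assumption, so we again integrate the variables $z_k$, $k\in A$, one at a time with Lemma \ref{MTP:Help1}. Equivalently, we iterate Theorem \ref{cMTP2:Seq}, moving indices from $A$ to $B$ until only one index $k$ remains in $A$. The last integration over $(-\infty,x_k]$, again by Lemma \ref{MTP:Help1}, yields that $\bx\mapsto\bbP(\bX\le\bx)$ is $\mtp$, i.e.\ $\LOMTP(\bX)$.

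The main obstacle is part (1), and specifically the measure-theoretic bookkeeping: identifying the displayed integral as a version of the regular conditional distribution times $f_{\bX_A}$, including on the null set $\{f_{\bX_A}=0\}$. Here we use the convention from the introduction that it suffices for some version to be $\mtp$. We also have to make sure Lemma \ref{MTP:Help1} applies with general, possibly discrete, marginal measures $\bbP^{X_j}$ and closed intervals $(-\infty,x_j]$. If that lemma were stated only for a single integrated coordinate, the induction above handles the multi-coordinate case.
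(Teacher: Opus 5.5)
Your proposal is correct and is essentially the paper's proof: in both parts the paper uses disintegration to write the relevant quantity as an integral over $(-\infty,\cdot\,]$ of an $\mtp$ integrand against a product measure, and then invokes Lemma \ref{MTP:Help1}. The only difference is that the paper applies that lemma in one step with $\mu=\Motimes_{i\in B}\bbP^{X_i}$ (resp.\ $\Motimes_{i\in A}\bbP^{X_i}$), since the lemma is already stated for $k$-dimensional product measures, so your coordinate-by-coordinate induction and the appeal to products and marginals of $\mtp$ functions are unnecessary.
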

\begin{proof}
We first prove \eqref{Thm.MTP2:3}.
\(\dMTP(\bX)\) yields the existence of a \(\Motimes_{i=1}^d \bbP^{X_i}\)--density \(f_{\bX}\) such that \(\bX_{A}\) has \(\Motimes_{i \in A} \bbP^{X_i}\)--density 
\begin{align*}
    f_{\bX_{A}} (\bx_{A})
    & = \int_{\bbR^{|B|}} f_{\bX} (\bx_{A},\bx_{B}) \de \Motimes_{i \in B} \bbP^{X_i}(x_{i})\,.
\end{align*}
Disintegration further gives
\begin{align*}
  \bbP(\bX_B \leq \bx_B \, | \, \bX_A = \bx_A) \, f_{\bX_A} (\bx_A)
  &   =  \int_{(-\boldsymbol{\infty},\bx_B]} f_{\bX} (\bx_{A},\bz_{B}) \de \Motimes_{i \in B} \bbP^{X_i}(z_{i})\,.
\end{align*}
Since \(f_{\bX}\) is \(\mtp\) by assumption, the assertion follows from Lemma \ref{MTP:Help1}.
\\
We now prove \eqref{Thm.MTP2:1}.
\(\cLOMTP(\bX_B|\bX_A)\) yields the existence of a \(\Motimes_{i \in A} \bbP^{X_i}\)--density \(f_{\bX_{A}}\) and disintegration gives
\begin{align*}
  \bbP(\bX_A \leq \bx_A, \bX_B \leq \bx_B) 
  &   =  \int_{(-\boldsymbol{\infty},\bx_A]} \bbP(\bX_B \leq \bx_B \, | \, \bX_A = \bz_A) \de \bbP^{\bX_A}(\bz_{A})
  \\
  &   =  \int_{(-\boldsymbol{\infty},\bx_A]} \bbP(\bX_B \leq \bx_B \, | \, \bX_A = \bz_A) f_{\bX_{A}}(\bz_{A}) \de \Motimes_{i \in A} \bbP^{X_i}(z_{i})\,.
\end{align*}
Again, since \((\bx_A,\bx_B) \mapsto  \bbP(\bX_B \leq \bx_B \, | \, \bX_A = \bx_A) f_{\bX_{A}}(\bx_{A})\) is \(\mtp\) by assumption, the assertion now follows from Lemma \ref{MTP:Help1}.
This proves \eqref{Thm.MTP2:1}. 
\end{proof}

Figure \ref{Fig:MTP2:Seq} illustrates the findings in Theorem \ref{cMTP2:Seq} and Theorem  \ref{Thm.MTP2}. 
We show in Section \ref{Subsect:Archimedean} that the reverse directions in Figure \ref{Fig:MTP2:Seq} generally fail.
By Remark \ref{Rem.cMTP2:Ex}, \(\cLOMTP(\bX_{B}|\bX_{A})\) with \(|A| = 1\) is the strongest among the positive dependence properties in Figure \ref{Fig:MTP2:Seq} that do not require the existence of a density and are satisfied by \SF{random vectors \(\bX\) with comonotone components}.

\subsection{\texorpdfstring{\(\cMTP\) and its relation to classical positive dependence properties referring to a partition}{cLOMTP and its relation to classical positive dependence properties referring to a partition}} \label{Sec:MTP2.B}

We now relate \(\cLOMTP(\bX_B|\bX_A)\) to two well-established concepts from the literature: tail monotonicity and stochastic monotonicity. 
The random vector \(\bX = (\bX_A,\bX_B)\) is said to be \cite{brindley_1972, nelsen_introduction_2007, Ru1981, harris_1970}
\begin{itemize}
\item 
\(\LTD(\bX_B|\bX_A)\), \emph{left tail decreasing in \(A\)}, if the mapping (whenever it exists)
\begin{equation} \label{LTD}
\bx_A \mapsto \bbP(\bX_B \leq \bx_B \,|\, \bX_A \leq \bx_A) \text{ is non-increasing for all } \bx_B \in \bbR^{|B|}\,.
\end{equation}

\item 
\(\LOSD (\bX_B|\bX_A)\), 
if for (a version of) the regular conditional distribution of \(\bX_B\) given \(\bX_A\) the mapping
\begin{equation} \label{LOSI}
\bx_A \mapsto \bbP(\bX_B \leq \bx_B \,|\, \bX_A = \bx_A) \text{ is non-increasing for all } \bx_B \in \bbR^{|B|} \text{ and } \bbP^{\bX_A}\textrm{-almost all } \bx_A \in \bbR^{|A|}.
\end{equation}
For singletons $B = \{i\}$, \(\bx_A \mapsto \bbP(X_B > x_B\,|\, \bX_A = \bx_A) = 1- \bbP(X_B \leq x_B\,|\, \bX_A = \bx_A) \) is non-decreasing and the property is usually referred to as \emph{stochastic increasingness} \cite{muller2006, Ru2013}. 
\end{itemize}



\begin{remark}\label{Cor.SI:Margins}
\begin{enumerate}[{\rm (1)}]
\item 
It is immediate from the definition that the $\LTD(\bX_B|\bX_A)$ property is closed under marginalization in both \(A\) and \(B\).
\item \label{Cor.SI:Margins:2}
Instead, while \(\LOSD (\bX_B|\bX_A)\) is closed under marginalization in \(B\), it is not closed under marginalization in \(A\); see Example \ref{Ex:SI:Margins}.
\end{enumerate}
\end{remark}

\begin{example} \label{Ex:SI:Margins}
Suppose \(\bX = (X_1,X_2,X_3) \in \{0,1\}^3\) with probabilities of occurrence
\begin{center}
$\displaystyle \begin{array}{ c|c c c c c c c c }
\bx = (x_{1},x_{2},x_{3}) & ( 0,0,0) & ( 1,0,0) & ( 0,1,0) & ( 0,0,1) & ( 1,1,0) & ( 1,0,1) & ( 0,1,1) & ( 1,1,1)\\
\hline
\bbP(\bX = \bx) & 0.09 & 0.28 & 0.12 & 0.01 & 0.02 & 0.12 & 0.28 & 0.08
\end{array}$
\end{center}
Consider \(A = \{1,2\}\), \(A' = \{1\}\), and \(B=\{3\}\). 
Then,  
\begin{center}
$\displaystyle \begin{array}{ c| c c c c }
(x_{1},x_{2}) & ( 0,0)  & ( 1,0) & ( 0,1) & ( 1,1) 
\\
\hline
\bbP(X_1 = x_{1},X_2=x_{2}) & 0.1  & 0.4 & 0.4 & 0.1 \\
\end{array}$
\end{center}
and 
\begin{center}
$\displaystyle \begin{array}{ c| c c c c }
(x_{1},x_{2}) & ( 0,0)  & ( 1,0) & ( 0,1) & ( 1,1) 
\\
\hline
\bbP(X_3 \leq 0 \,|\, (X_1,X_2)=(x_{1},x_{2})) & 0.9  & 0.7 & 0.3 & 0.2 \\
\bbP(X_3 \leq 1 \,|\, (X_1,X_2)=(x_{1},x_{2})) & 1  & 1 & 1 & 1 \\
\end{array}$
\end{center}
which proves $\LOSD(\bX_B|\bX_A)$.
In contrast, 
\(\bbP(X_3 \leq 0 \,|\, X_1=0)  = 0.42 < 0.6 = \bbP(X_3 \leq 0 \,|\, X_1=1)\),
which contradicts $\LOSD(\bX_B|\bX_{A'})$.
\end{example}


The next result relates \(\cLOMTP(\bX_B|\bX_A)\) to stochastic monotonicity \(\LOSD(\bX_B|\bX_A)\) and tail monotonicity \(\LTD(\bX_B|\bX_A)\).
Assertion~\eqref{Prop.SI.LTD.RTI:1} of Theorem~\ref{Thm.MTP2.SI} is well established in the literature \cite{lehmann1966}.

\begin{theorem}\label{Prop.SI.LTD.RTI}\label{Thm.MTP2.SI}
We have
\begin{enumerate}[{\rm (1)}]
\item \label{Thm.MTP2.SI:1} \(\cLOMTP(\bX_B|\bX_A)\) implies \(\LOSD(\bX_B|\bX_A)\).
\item \label{Thm.MTP2.SI:2} \SF{\(\cLOMTP(\bX_B|\bX_A)\) implies \(\LTD(\bX_B|\bX_A)\).}
\item \label{Prop.SI.LTD.RTI:1} \SF{If \(|A|=1\), then \(\LOSD(\bX_B|\bX_A)\) implies \(\LTD(\bX_B|\bX_A)\).}
\end{enumerate}
\end{theorem}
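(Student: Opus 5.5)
For (1), I will work directly with the map $h(\bx_A,\bx_B) := \bbP(\bX_B \leq \bx_B \,|\, \bX_A = \bx_A)\, f_{\bX_A}(\bx_A)$, which is $\mtp$ by assumption. Fix $\bx_A \leq \bx_A'$ and $\bx_B$, and let $\bx_B^\ast \to +\boldsymbol{\infty}$. Apply the $\mtp$ inequality to the points $(\bx_A', \bx_B)$ and $(\bx_A, \bx_B^\ast)$. Their minimum is $(\bx_A,\bx_B)$ and their maximum is $(\bx_A',\bx_B^\ast)$, so
\[
h(\bx_A',\bx_B)\,h(\bx_A,\bx_B^\ast) \leq h(\bx_A,\bx_B)\,h(\bx_A',\bx_B^\ast).
\]
Letting $\bx_B^\ast$ increase along a sequence, the conditional distribution function tends to $1$, so $h(\cdot,\bx_B^\ast) \to f_{\bX_A}(\cdot)$. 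This gives
\[
\bbP(\bX_B\le\bx_B|\bX_A=\bx_A')\, f(\bx_A')\, f(\bx_A) \le \bbP(\bX_B\le\bx_B|\bX_A=\bx_A)\, f(\bx_A)\, f(\bx_A').
\]
Wherever $f(\bx_A) f(\bx_A')>0$ we can divide, which yields monotonicity of the conditional distribution function.

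The main obstacle is the null set. The relevant points are those where $f_{\bX_A}=0$; this set is $\bbP^{\bX_A}$-null. The limit holds for all $\bx_A$ in the chosen version, since a regular conditional distribution is a distribution for every $\bx_A$. We therefore get monotonicity on $\{f>0\}$, which has full $\bbP^{\bX_A}$-measure. One technical point remains: the definition requires a single version that is non-increasing on a full-measure set, and this is exactly what we obtain. Countability in $\bx_B$ is not needed, since the inequality holds pointwise for the fixed version.

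For (2), I will use disintegration as in the proof of Theorem \ref{Thm.MTP2}(2):
\[
\bbP(\bX_B\le\bx_B,\bX_A\le\bx_A)=\int_{(-\boldsymbol\infty,\bx_A]} h(\bz_A,\bx_B)\de\Motimes_{i\in A}\bbP^{X_i}(z_i).
\]
By Lemma \ref{MTP:Help1}, the map $(\bx_A,\bx_B)\mapsto G(\bx_A,\bx_B):=\bbP(\bX_A\le\bx_A,\bX_B\le\bx_B)$ is $\mtp$; indeed this is just $\LOMTP(\bX)$ from Theorem \ref{Thm.MTP2}. The same argument as above then applies with $\bx_B^\ast\to\infty$: for $\bx_A\le\bx_A'$ we get $G(\bx_A',\bx_B)\,G(\bx_A)\le G(\bx_A,\bx_B)\,G(\bx_A')$, where $G(\bx_A)=\bbP(\bX_A\le\bx_A)$. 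Dividing by the positive marginal probabilities, which is exactly where the conditional probability exists, gives $\LTD(\bX_B|\bX_A)$. So (2) in fact follows from $\LOMTP(\bX)$ alone, and here continuity from below of probabilities makes the limit unproblematic.

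For (3), with $|A|=1$ I will write
\[
\bbP(\bX_B\le\bx_B\,|\,X_A\le x)=\frac{1}{\bbP(X_A\le x)}\int_{(-\infty,x]} g(z)\de\bbP^{X_A}(z),
\]
where $g(z)=\bbP(\bX_B\le\bx_B|X_A=z)$ is non-increasing $\bbP^{X_A}$-a.s. This is an average of a non-increasing function over the lower set $(-\infty,x]$. For $x<x'$ it splits as a convex combination of the average over $(-\infty,x]$ and the average over $(x,x']$. The latter is at most $g$ evaluated near $x$, which is at most the former average. Hence the average is non-increasing in $x$. The only care needed is to handle the null set where $g$ fails monotonicity by modifying $g$ on it, which does not change the integrals, and to handle $\bbP^{X_A}((x,x'])=0$ trivially.
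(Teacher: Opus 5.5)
Your proposal is correct. Parts (1) and (2) follow the paper's proof essentially verbatim:
\begin{itemize}
\item In (1) the paper also applies the $\mtp$ inequality with one $\bX_B$-argument sent to $+\infty$ along a sequence, and then divides on $\{f_{\bX_A}>0\}$. Your pointwise treatment of the exceptional set $\{f_{\bX_A}=0\}$, which does not depend on $\bx_B$, is slightly cleaner than the paper's ``for almost all $\bx_A,\by_A$''.
\item In (2) the paper likewise passes through $\LOMTP(\bX)$ via Theorem~\ref{Thm.MTP2}(2) and uses the same limiting argument.
\end{itemize}

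For (3) the paper gives no argument and simply cites Lehmann (1966). Your averaging argument is the standard proof of that fact and makes the theorem self-contained. After replacing $g$ by an everywhere non-increasing function that agrees with it off a $\bbP^{X_A}$-null set, the average over $(x,x']$ is at most $g(x)$, which in turn is at most the average over $(-\infty,x]$. The argument also shows why $|A|=1$ is needed: every point of $(x,x']$ dominates every point of $(-\infty,x]$. This fails for $\{\bz_A\le\bx_A'\}\setminus\{\bz_A\le\bx_A\}$ when $|A|\ge 2$.
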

\begin{proof}
We first prove \eqref{Thm.MTP2.SI:1}.
Consider \(\bx_B \in \bbR^{|B|}\).
Then \(\cLOMTP(\bX_B|\bX_A)\) yields the existence of a \(\Motimes_{i \in A} \bbP^{X_i}\)--density \(f_{\bX_A}\) such that, for almost all \(\bx_A,\by_A \in \bbR^{|A|}\) with $\bx_A \leq \by_A$,
\begin{align*}
  & \bbP(\bX_B \leq \bx_B \, | \, \bX_A = \by_A) \, f_{\bX_A} (\by_A) \, f_{\bX_A} (\bx_A)
  \\
  &   =  \bbP(\bX_B \leq \bx_B \, | \, \bX_A = \by_A) \, f_{\bX_A} (\by_A) \, \bbP(\bX_B \in \bbR^{|B|} \, | \, \bX_A = \bx_A) \, f_{\bX_A} (\bx_A)
  \\
  & \leq \bbP(\bX_B \leq \bx_B \, | \, \bX_A = \bx_A) \, f_{\bX_A} (\bx_A) \, \bbP(\bX_B \in \bbR^{|B|} \, | \, \bX_A = \by_A) \, f_{\bX_A} (\by_A)
  \\
  &   =  \bbP(\bX_B \leq \bx_B \, | \, \bX_A = \bx_A) \, f_{\bX_A} (\bx_A) \, f_{\bX_A} (\by_A)\,.
\end{align*}
The \(\mtp\) inequality is achieved by taking a sequence \(\by_B^{(n)} \uparrow (\infty, \dots,\infty)\) and passing to the limit.
For all $\bx_A \leq \by_A$ with $f_{\bX_A}(\bx_A) f_{\bX_A}(\by_A)>0$, division yields the desired inequality.
This proves \(\LOSD(\bX_B|\bX_A)\). 
\\
\SF{Again, assume that \(\cLOMTP(\bX_B|\bX_A)\) holds. 
Then Theorem \ref{Thm.MTP2}\eqref{Thm.MTP2:1} implies \(\LOMTP(\bX)\).
Now, choose two non-empty sets \(A, B\) partitioning \(\{1,2,\dots,d\}\), and note that \(\LTD(\bX_B | \bX_A)\) is equivalent to 
\begin{align*}
  \bbP(\bX_A \leq \by_{A}, \bX_B \leq \bx_B) \, \bbP (\bX_A \leq \bx_A, \bX_B \in \bbR^{|B|})
  & \leq \bbP(\bX_A \leq \bx_{A}, \bX_B \leq \bx_B) \, \bbP (\bX_A \leq \by_{A}, \bX_B \in \bbR^{|B|})
\end{align*}
for all \(\bx_B \in \bbR^{|B|}\) and all \(\bx_A, \by_{A} \in \bbR^{|A|}\) with \(\bx_A \leq \by_{A}\) such that $\bbP( \bX_A \leq \bx_A) > 0$.
But this inequality is indeed fulfilled since \(\LOMTP(\bX)\) holds; the \(\mtp\) inequality is achieved by taking a sequence \(\by_B^{(n)} \uparrow (\infty, \dots,\infty)\) and passing to the limit.
\\
Assertion \eqref{Prop.SI.LTD.RTI:1} is due to \cite{lehmann1966}.}
\end{proof}

Figure \ref{Fig:ConditionalDepProp} illustrates the findings in Theorem~\ref{Thm.MTP2.SI}.
It follows from \cite[Example 3.5]{fuchs_total_2023}
that the reverse direction in Theorem~\ref{Thm.MTP2.SI}\eqref{Thm.MTP2.SI:1} does not hold, in general. A counterexample for the reverse direction in Theorem~\ref{Thm.MTP2.SI}\eqref{Prop.SI.LTD.RTI:1} can be found in \cite[Remark 2.7]{fuchs_total_2023}.
\begin{figure}[h] 
\begin{center}
$$
\begin{array}{cccccc}
&&&& \LTD(\bX_B|\bX_A) 
\\
&&& \Nearrow &
\\
\textrm{Independence} & \Longrightarrow  & \cLOMTP(\bX_B|\bX_A) && \Uparrow_{|A|=1}
\\
&&& \Searrow &
\\
&&&& \LOSD(\bX_B|\bX_A)
\end{array}
$$
\end{center}
\caption{Interrelations among the different positive dependence properties discussed in Subsection \ref{Sec:MTP2.B} that refer to a given partition of \(\{1,2,\dots,d\}\), \(d \geq 2\), into non-empty sets \(A\) and \(B\).
Arrows pointing from left to right indicate strict implications; that is, the property on the left is strictly stronger than, and therefore implies, the property on the right.}
\label{Fig:ConditionalDepProp}
\end{figure}

\subsection{\texorpdfstring{\(\cMTP\) and its relation to classical positive dependence properties of random vectors}{cLOMTP and its relation to classical positive dependence properties of random vectors}} \label{Sec:PDP.RV}

The dependence properties introduced in Subsection \ref{Sec:MTP2.B} for a fixed partition now serve as a basis for formulating and analyzing the corresponding properties for random vectors \(\bX = (X_1,X_2, \dots, X_d)\), where \(d \geq 2\), without imposing any particular partition on its components.

The random vector \(\bX\) is said to be \cite{karlin_1980, brindley_1972, nelsen_introduction_2007,  muller2002}
\begin{itemize}
\item 
$\PLOD(\bX)$, \emph{positive lower orthant dependent}, if the inequality 
\(\bbP( \bX \leq \bx) \geq \prod^d_{i = 1} \bbP(X_i \leq x_i)\) holds for all \(\bx \in \bbR^d\); 
\item 
$\LTD(\bX)$, \emph{left tail decreasing}, if \(\bX\) is \(\LTD(\bX_B|\bX_A)\) for all pairs of non-empty sets \(A, B\) partitioning \(\{1,2,\dots,d\}\);
\item 
\(\LOSD(\bX)\), \emph{(lower orthant) stochastically increasing}, 
if \(\bX\) is \(\LOSD (\bX_B|\bX_A)\) for all pairs of non-empty sets \(A, B\) partitioning \(\{1,2,\dots,d\}\).
\end{itemize}
Recall the definitions of \(\MTP(\bX)\) and \(\cMTP(\bX)\) from the introduction.

\begin{remark}
Related notions of stochastic monotonicity that refer to specific subsets \(A\) and \(B\), not necessarily a partition of \(\{1,\dots,d\}\), are listed below: 
\(\bX\) is said to be \cite{nelsen_introduction_2007, joe_multivariate_1997, muller2006, Ru2013}
\begin{itemize}
\item 
\(\LOPDS(\bX)\), \emph{lower orthant positive dependent through the stochastic ordering},  
if the vector \(\bX\) is \(\LOSD (\bX_B|\bX_A)\) for all singletons \(A=\{i\}\) and \(B = \{1,2,\dots,d\}\backslash\{i\}\), \(i \in \{1,2,\dots,d\}\); 
\item 
\(\CIS(\bX)\), \emph{conditionally increasing in sequence}, 
if the vector \(\bX\) is \(\LOSD (\bX_B|\bX_A)\) for all singletons $B=\{i\}$ and $A = \{1,2,\dots,i-1\}$, \(i \in \{2,\dots,d\}\); 
\item 
\(\CI(\bX)\), \emph{conditionally increasing}, 
if the vector \(\bX\) is \(\LOSD (\bX_B|\bX_A)\) for all singletons $B=\{i\}$ and \(\emptyset \neq A \subseteq \{1,2,\dots, d\}\backslash\{i\}\), \(i \in \{1,2,\dots,d\}\).
\end{itemize}
We note in passing that in the bivariate case ($d=2$) \(\LOSD(\bX)\), \(\LOPDS(\bX)\), and \(\CI(\bX)\) are equivalent. 
If, in addition, the bivariate random vector \(\bX\) is exchangeable, then these properties are also equivalent to \(\CIS(\bX)\). 
\end{remark}


\begin{remark} \label{ClosedMargins.2}
It is immediate from the definitions that $\PLOD(\bX)$, $\LTD(\bX)$, $\LOSD(\bX)$, and \(\LOMTP(\bX)\) are closed under marginalization and, 
due to \cite[Lemma 2.12]{sarkar_1969} (see also \cite[Proposition 3.2]{karlin_1980}), the same applies to \(\dMTP(\bX)\). Closure under marginalization for \(\cLOsMTP(\bX)\) follows from Corollary \ref{Cor.cMTP2:Margins}. 
\end{remark}

The next result is due to Theorem~\ref{Thm.MTP2.SI}\eqref{Thm.MTP2.SI:1} and Theorem \ref{Thm.MTP2}.

\begin{corollary} \label{Cor.MTP2}~~
We have
\begin{enumerate}[{\rm (1)}]
\item \label{Cor.MTP2:1} \(\dMTP(\bX)\) implies \(\cLOMTP(\bX)\).
\item \label{Cor.MTP2:2} \(\cLOsMTP(\bX)\) implies \(\LOSD(\bX)\).
\item \label{Cor.MTP2:3} \(\cLOMTP(\bX)\) implies \(\LOMTP(\bX)\).
\end{enumerate}
\end{corollary}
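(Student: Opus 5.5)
The corollary follows from the partition-wise results by quantifying over all partitions, since each of \(\cLOsMTP(\bX)\) and \(\LOSD(\bX)\) is defined as the corresponding partition property holding for every pair of non-empty sets \(A,B\) partitioning \(\{1,2,\dots,d\}\). I will prove the three assertions in turn.

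For \eqref{Cor.MTP2:1}, assume \(\dMTP(\bX)\). Fix an arbitrary partition \(A,B\) of \(\{1,2,\dots,d\}\) into non-empty sets. By Theorem \ref{Thm.MTP2}\eqref{Thm.MTP2:3}, \(\bX\) is \(\cLOMTP(\bX_B|\bX_A)\). Since the partition was arbitrary, Definition \ref{Def:cMTP2}(2) gives \(\cLOsMTP(\bX)\).

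For \eqref{Cor.MTP2:2}, assume \(\cLOsMTP(\bX)\) and fix any partition \(A,B\). Then \(\cLOMTP(\bX_B|\bX_A)\) holds, so Theorem \ref{Thm.MTP2.SI}\eqref{Thm.MTP2.SI:1} yields \(\LOSD(\bX_B|\bX_A)\). Quantifying over all partitions gives \(\LOSD(\bX)\) by definition.

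For \eqref{Cor.MTP2:3}, we only need a single partition. Since \(d \geq 2\), we may take, for instance, \(A=\{1\}\) and \(B=\{2,\dots,d\}\). Then \(\cLOsMTP(\bX)\) gives \(\cLOMTP(\bX_B|\bX_A)\), and Theorem \ref{Thm.MTP2}\eqref{Thm.MTP2:1} gives \(\LOMTP(\bX)\).

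There is no genuine obstacle: all analytic content, namely the integration of \(\mtp\) kernels via Lemma \ref{MTP:Help1} and the division argument for stochastic monotonicity, is already contained in Theorems \ref{Thm.MTP2} and \ref{Thm.MTP2.SI}. The only points needing care are bookkeeping. In \eqref{Cor.MTP2:1}, the density \(f_{\bX_A}\) required by Definition \ref{Def:cMTP2AB} is supplied by marginalizing \(f_{\bX}\), which the proof of Theorem \ref{Thm.MTP2} already does. In \eqref{Cor.MTP2:3}, a non-empty partition must exist, which is guaranteed by \(d\geq 2\).
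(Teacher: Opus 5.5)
Your proposal is correct and is the same argument the paper uses. The paper gives no separate proof: it attributes the corollary directly to Theorem \ref{Thm.MTP2} and Theorem \ref{Thm.MTP2.SI}\eqref{Thm.MTP2.SI:1}, applied over all partitions for (1) and (2) and over any single partition for (3), which is exactly the bookkeeping you carry out.
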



We further briefly summarize the connections between the classical positive dependence properties.

\begin{corollary}\label{Cor.LTD.RTI.POD}\label{Prop.SI.LTD.RTI:B}\label{Prop.PDS.PLOD:B}
We have
\begin{enumerate}[(1)]
\item \label{Prop.MTP2.LTD.RTI:1} \(\LOMTP(\bX)\) implies \(\LTD(\bX)\). 
\item \label{Cor.LTD.RTI.POD1}    $\LTD(\bX)$ implies $\PLOD(\bX)$. 
\end{enumerate}
\end{corollary}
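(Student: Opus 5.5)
For (1), we reuse the argument from the proof of Theorem~\ref{Thm.MTP2.SI}\eqref{Thm.MTP2.SI:2}. Fix a partition $A,B$ of $\{1,\dots,d\}$, a point $\bx_B\in\bbR^{|B|}$, and $\bx_A\le\by_A$ with $\bbP(\bX_A\le\bx_A)>0$. The claim $\LTD(\bX_B|\bX_A)$ is equivalent to
\begin{align*}
\bbP(\bX_A\le\by_A,\bX_B\le\bx_B)\,\bbP(\bX_A\le\bx_A)\le \bbP(\bX_A\le\bx_A,\bX_B\le\bx_B)\,\bbP(\bX_A\le\by_A).
\end{align*}
To obtain it, we apply the $\mtp$ inequality for the distribution function $F$ of $\bX$ to the points $(\by_A,\bx_B)$ and $(\bx_A,\bz_B^{(n)})$, where $\bz_B^{(n)}\uparrow(\infty,\dots,\infty)$ and $n$ is large enough that $\bz_B^{(n)}\ge\bx_B$. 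Their meet is $(\bx_A,\bx_B)$ and their join is $(\by_A,\bz_B^{(n)})$. Letting $n\to\infty$ and using continuity of probability from below gives the display. Dividing by the positive quantities then yields the monotonicity in \eqref{LTD}. Since the partition was arbitrary, $\LTD(\bX)$ follows.

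For (2), we proceed by induction on $d$, peeling off one coordinate at a time, and we use that $\LTD(\bX)$ is closed under marginalization (Remark~\ref{ClosedMargins.2}). For $d=2$, $\LTD(X_2|X_1)$ says that $x_1\mapsto \bbP(X_2\le x_2\mid X_1\le x_1)$ is non-increasing. Letting $x_1\to\infty$ gives $\bbP(X_2\le x_2\mid X_1\le x_1)\ge\bbP(X_2\le x_2)$, which is exactly $\PLOD$. If $\bbP(X_1\le x_1)=0$, both sides of the $\PLOD$ inequality vanish.

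For general $d$, take $A=\{1,\dots,d-1\}$ and $B=\{d\}$. The same limiting argument gives
\begin{align*}
\bbP(\bX\le\bx)\ge \bbP(\bX_A\le\bx_A)\,\bbP(X_d\le x_d)\qquad\text{whenever } \bbP(\bX_A\le\bx_A)>0,
\end{align*}
and the inequality is trivial when $\bbP(\bX_A\le\bx_A)=0$. The subvector $\bX_A$ is $\LTD$ by marginal closure, so the induction hypothesis gives $\bbP(\bX_A\le\bx_A)\ge\prod_{i<d}\bbP(X_i\le x_i)$. Combining the two inequalities yields $\PLOD(\bX)$.

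The only delicate point is the limit step: it must respect the convention that \eqref{LTD} is only required where the conditional probabilities are defined, so we should verify that taking $\by_A\uparrow\infty$ in \eqref{LTD} stays within that domain. Apart from this, everything is routine, and we expect no real obstacle.
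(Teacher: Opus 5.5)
Your part (1) follows the paper's argument: the paper refers back to the proof of Theorem~\ref{Thm.MTP2.SI}\eqref{Thm.MTP2.SI:2}, which applies the $\mtp$ inequality for the distribution function to two points. Their meet is $(\bx_A,\bx_B)$, and the join has its $B$-coordinates sent to infinity, exactly as in your construction.

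For part (2) the paper gives no argument and only cites \cite[Section 4]{sarkar_1969}. You instead supply a self-contained induction, which is correct. From $\LTD(X_d\mid \bX_{\{1,\dots,d-1\}})$, letting $\by_A\uparrow(\infty,\dots,\infty)$ gives $\bbP(\bX\le\bx)\ge\bbP(\bX_A\le\bx_A)\,\bbP(X_d\le x_d)$. Marginal closure and the induction hypothesis then finish the proof.

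Compared with the citation, your route buys two things:
\begin{enumerate}
\item The paper becomes self-contained on this point.
\item Your argument visibly uses only the sequential conditions that, for $k=2,\dots,d$, the marginal $(X_1,\dots,X_k)$ has $X_k$ left tail decreasing in $(X_1,\dots,X_{k-1})$. Each of these follows from $\LTD(\bX)$ by marginalizing in $B$. So you in fact obtain $\PLOD(\bX)$ under a weaker hypothesis than full $\LTD(\bX)$.
\end{enumerate}

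The ``delicate point'' you flag resolves immediately. For $\by_A\ge\bx_A$ one has $\bbP(\bX_A\le\by_A)\ge\bbP(\bX_A\le\bx_A)>0$. Hence the set where the conditional probabilities in \eqref{LTD} are defined is an up-set, and the limit $\by_A\uparrow(\infty,\dots,\infty)$ never leaves it. Continuity from below then gives the limit $\bbP(X_d\le x_d)$.
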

\begin{proof}
A proof for assertion \eqref{Prop.MTP2.LTD.RTI:1} has been presented in the proof of Theorem \ref{Thm.MTP2.SI}\eqref{Thm.MTP2.SI:2}. 
The analogous assertion for survival functions instead of distribution functions is given in \cite[Theorem 3.1]{brindley_1972}.
Assertion \eqref{Cor.LTD.RTI.POD1} is due to \cite[Section 4]{sarkar_1969}.
\end{proof}

\begin{figure}[h!] 
\begin{center}
$$
\begin{array}{cccccccccc}
&&&& \LOMTP(\bX)  &&&&&
\\
&&& \Nearrow &&& \Searrow &&&
\\
\dMTP(\bX) & \Longrightarrow  & \cLOsMTP(\bX) &&&&& \LTD(\bX) & \Longrightarrow & \PLOD(\bX)
\\
&&& \Searrow &&&  &&&
\\
&&&& \SI(\bX) &&&&&
\end{array}
$$
\end{center}
\caption{Interrelations among the different positive dependence properties discussed in Subsection \ref{Sec:PDP.RV}, with particular emphasis on the placement of \(\cLOsMTP\) within this hierarchy. 
Arrows pointing from left to right indicate strict implications; that is, the property on the left is strictly stronger than, and therefore implies, the property on the right.}
\label{Fig:DepProp_1}
\end{figure}

Figure \ref{Fig:DepProp_1} illustrates the findings of Section \ref{Sec:PDP.RV}.
It is well known that, in general, the reverse implications in Corollary \ref{Cor.LTD.RTI.POD} do not hold. Likewise, the failure of the reverse implications in Corollary \ref{Cor.MTP2} follows from the corresponding results in Subsections  \ref{Sec:MTP2:A} and \ref{Sec:MTP2.B}.
For $d=2$, Figure \ref{Fig:DepProp_1} reduces to \cite[Fig.~2]{fuchs_total_2023}.

\begin{remark} (Scale invariance) \label{Rem:Copula}~~
For a random vector \(\bX\) with continuous joint distribution function \(F_\bX\), Sklar's theorem \cite{nelsen_introduction_2007,durante_principles_2015} guarantees the existence of a \emph{unique} copula \(C_\bX\) that fully encodes the dependence structure among the components of \(\bX\), that is,
\begin{align*}
    F_\bX(\bx) = C_\bX (F_{X_1}(x_1), \dots, F_{X_d}(x_d))
\end{align*}
for all \(\bx \in \bbR^d\).
Consequently, provided the required densities exist where needed, any positive dependence property considered in this section can be equivalently verified at the copula level. 
This follows by applying the probability integral transform \cite{Ru2013}
\begin{align*}
  \bU := (F_{X_1}(X_1), \dots, F_{X_d}(X_d))\,,
\end{align*}
whose joint distribution function is the copula \(C_\bX\) and whose marginals are uniform on \([0,1]\).
\SF{This holds, in particular, for \(\cLOMTP(\bX_B|\bX_A)\), as in the case of continuous marginal distribution functions \(F_{X_i}\), \(i \in \{1,\dots,d\}\), \eqref{Def:cMTP2AB:Eq} simplifies to
\begin{align*}
    \bbP(\bX_B \leq \bx_B \,|\, \bX_A = \bx_A) \, f_{\bX_A} (\bx_A)
    & = \bbP \left( \bigcap_{k \in B} \{U_k \leq F_{X_k}(x_k)\} \,\Big|\, \bigcap_{i \in A} \{U_i = F_{X_i}(x_i)\} \right) \, c_{A} ({\bf F}_{\bX_A}(\bx_A))\,,
\end{align*}
where \({\bf F}_{\bX_A}(\bx_A) = (F_{X_i}(x_i))_{i \in A}\) and it is required that the copula of \(\bX_A\) has density \(c_A\).
Therefore, \(\bX\) is \(\cLOMTP(\bX_B|\bX_A)\) if the mapping 
\begin{align*}
    (\bu_A,\bu_B) 
    & \mapsto \bbP \left(\bU_B \leq \bu_B \,|\, \bU_A = \bu_A \right) \, c_{A} (\bu_A)
\end{align*}
is \(\mtp\).}
Hence, for continuous random vectors \(\bX\), the positive dependence properties under consideration are margin-free and may equivalently be formulated in terms of random vectors supported on \([0,1]^d\) with uniform marginals.
\end{remark}

\section{\texorpdfstring{Verifying $\cLOMTP(\bX)$ for families of multivariate distributions}{Verifying cLOMTP(X) in families of multivariate distributions}}\label{Sec:Ex}

We now investigate the dependence property \(\cLOMTP(\bX)\), introduced in Definition \ref{Def:cMTP2}, for selected families of multivariate distributions. We also clarify its relationship with the positive dependence properties displayed in Figure \ref{Fig:DepProp_1}.

\SF{Every random vector \(\bX\) with independent components} satisfies \(\dMTP(\bX)\) and \SF{every random vector \(\bX\) with comonotone components} satisfies \(\cLOsMTP(\bX_B|\bX_A)\) with \(|A|=1\) (Remark \ref{Rem.cMTP2:Ex}).
We now turn to mixtures of these two extremal cases; we refer to \cite[Example 3.5]{fuchs_total_2023} for the case \(d=2\).

\begin{example}[Mixtures of independence and comonotonicity]~~ \label{Ex:Mixture}
For \(\alpha \in [0,1]\) and univariate continuous distribution functions \(F_i\), \(i \in \{1,\dots,d\}\), \(d \geq 3\), consider the distribution function \(F_{\alpha}\) given by
\begin{align*}
  F_{\alpha} 
  & = \alpha F^{\textrm{cm}} + (1-\alpha) F^{\perp}\,,
\end{align*}
with \(F^{\textrm{cm}}(\bx) := \min\{F_1(x_1), \dots,F_d(x_d)\}\) and \(F^{\perp}(\bx) := \prod_{i=1}^d F_i(x_i)\).
Denote by \(\bX_\alpha\) a random vector with distribution function \(F_\alpha\). 
Then 
\begin{align*}
\begin{array}{ccccccccc}
  \dMTP(\bX_\alpha) 
  & \Longleftrightarrow & \cLOsMTP(\bX_\alpha) 
  & \Longrightarrow & \underset{|A|=1}{\cLOMTP((\bX_\alpha)_B|(\bX_\alpha)_A)}
  & \Longrightarrow & \underset{|A|=1}{\LOSD(\bX_\alpha)} 
  & \Longleftrightarrow & \LOMTP(\bX_\alpha) 
\end{array}  
\end{align*}
with the reverse implications being invalid.
More precisely, 
\begin{enumerate}[{\rm (1)}]
\item
\(\dMTP(\bX_\alpha)\), if and only if \(\alpha=0\);
\item 
\(\cLOMTP(\bX_\alpha)\), if and only if \(\alpha=0\);
\item 
\(\cLOMTP((\bX_\alpha)_B|(\bX_\alpha)_A)\) with \(|A|=1\), if and only if \(\alpha \in \{0,1\}\);
\item 
\(\LOSD(\bX_\alpha)\) with \(|A|=1\) for all \(\alpha \in [0,1]\).
\item 
\(\MTP(\bX_\alpha)\) for all \(\alpha \in [0,1]\).
\end{enumerate}
\end{example}

In the following example, we verify the \(\cLOMTP(\bX)\) property for binary distributions with strictly positive probabilities by applying Corollary \ref{Cor.MTP2}.
Here, the situation differs from the bivariate case, where \(\dMTP(\bX)\) and \(\PLOD(\bX)\) are equivalent, and changes considerably in the more general setting.

\begin{example}[Binary distributions, \(d \geq 3\)] \label{Ex.Binary} 
Suppose \(\bX = (X_1, \dots, X_d)\), \(d \geq 3\), is a \(d\)-dimensional binary random vector taking on values in \(\{0,1\}^{d}\).
Denote by \(p_{i_1,\dots,i_d} := \bbP(X_1=i_1, \dots, X_d=i_d)\) the corresponding probability of occurrence, and assume that \(p_{i_1,\dots,i_d} > 0\) for all \((i_1,\dots,i_d) \in \{0,1\}^{d}\). Then 
\begin{align*}
\begin{array}{ccccccc}
\dMTP(\bX) & 
\Longleftrightarrow & \cLOsMTP(\bX) & 
\Longleftrightarrow & \LOSD(\bX) & 
\Longrightarrow & \LOMTP(\bX) 
\end{array}  
\end{align*}
with the reverse implication being invalid. We verify the result in two steps.
\begin{enumerate}[{\rm (1)}]
\item We first show that \(\LOSD(\bX)\) implies \(\dMTP(\bX)\). 
The equivalences then follow by 
Corollary \ref{Cor.MTP2}.
\\
Therefore, assume that \(\LOSD(\bX)\) holds. 
We verify the assertion by applying \cite[Proposition 2.1]{karlin_1980}, which states that a strictly positive function \(f\) is \(\mtp\) if, for every pair of arguments, the corresponding bivariate function obtained by holding all remaining arguments fixed is totally positive of order two.
For convenience, we fix coordinates \(3\) to \(d\) and verify total positivity of order two for the function \((i_1,i_2) \mapsto \bbP(X_1=i_1, X_2=i_2, \bX_3 = {\bf i}_3)\), 
where \(\bX_3 = (X_3, \dots, X_d)\) and \({\bf i}_3 = (i_3, \dots, i_d)\).
Since \(\LOSD(\bX)\) holds, we obtain
\begin{align*}
    \frac{\bbP(X_1=0 \,|\, X_2=1, \bX_3 = {\bf i}_3)}
         {\bbP(X_1=0 \,|\, X_2=0, \bX_3 = {\bf i}_3)} 
    & \leq 1 
    & \frac{\bbP(X_1=1 \,|\, X_2=0, \bX_3 = {\bf i}_3)}
           {\bbP(X_1=1 \,|\, X_2=1, \bX_3 = {\bf i}_3)} 
    & \leq 1
\end{align*}
due to \(\LOSD(X_1 | (X_2,\bX_3))\), and hence
\begin{align*}
  & \frac{\bbP(X_1=0, X_2=1, \bX_3 = {\bf i}_3) \, \bbP(X_1=1, X_2=0, \bX_3 = {\bf i}_3)}
       {\bbP(X_1=0, X_2=0, \bX_3 = {\bf i}_3) \, \bbP(X_1=1, X_2=1, \bX_3 = {\bf i}_3)}
  \\
  & = \frac{\bbP(X_1=0 \,|\, X_2=1, \bX_3 = {\bf i}_3)}           
           {\bbP(X_1=0 \,|\, X_2=0, \bX_3 = {\bf i}_3)} \,        
      \frac{\bbP(X_1=1 \,|\, X_2=0, \bX_3 = {\bf i}_3)}
           {\bbP(X_1=1 \,|\, X_2=1, \bX_3 = {\bf i}_3)} 
  \leq 1\,.
\end{align*}
The same argument applies to any pair of coordinates.
This proves the implication \(\LOSD(\bX) \Longrightarrow \dMTP(\bX)\).

\item 
We now present a counterexample that is \(\MTP(\bX)\) but not  \(\dMTP(\bX)\):
Suppose \(\bX \in \{0,1\}^3\) with probabilities of occurrence
\begin{center}
$\displaystyle \begin{array}{ c|c c c c c c c c }
\bx = (x_{1},x_{2},x_{3}) & ( 0,0,0) & ( 1,0,0) & ( 0,1,0) & ( 0,0,1) & ( 1,1,0) & ( 1,0,1) & ( 0,1,1) & ( 1,1,1)\\
\hline
\bbP(\bX = \bx) & 0.2 & 0.05 & 0.05 & 0.05 & 0.2 & 0.2 & 0.1 & 0.15
\end{array}$
\end{center}
Then it is straightforward to verify \(\MTP(\bX)\), but \(p_{111} p_{100} = 3/400 < 16/400= p_{110} p_{101}\) and hence  \(\dMTP(\bX)\) fails.
\end{enumerate}
\end{example}

For trivariate binary distributions, we can prove a slightly stronger result under weaker assumptions, which in particular permit zero probabilities; see Example \ref{Ex.Binary.d=3}.

\SF{Before turning to the dependence properties of Archimedean copulas, we note that multivariate \(t\)-distributions fail to satisfy \(\cMTP\).}

\begin{example} \label{Ex.t} 
\SF{If \(\bX\) follows a multivariate \(t\)-distribution, then \(\cMTP(\bX)\) fails.  
\\
Indeed, according to \cite[Proposition 4.3]{rossell2021} \(\CI(\bX)\) fails, and the result therefore follows from Theorem \ref{Thm.MTP2.SI} and Corollary \ref{Cor.MTP2}.}
\end{example}

\subsection{Archimedean copulas} \label{Subsect:Archimedean}

In this section, we study continuous random vectors \(\bX\) that are associated with an Archimedean copula and provide a characterization of \(\cMTP(\bX)\) in terms of the Archimedean generator.
In doing so, we extend results in \cite{muller_archimedean_2005} who characterize \(\PLOD(\bX)\), \(\CI(\bX)\), \(\CIS(\bX)\), and \(\dMTP(\bX)\), as well as those in \cite{fuchs_total_2023} who study characterizations of \(\cMTP(\bX)\) in the bivariate case \(d=2\).
In particular, it has been shown in \cite[Theorem 4.6]{fuchs_total_2023} that, for \(d=2\), \(\cMTP(\bX)\) is equivalent to \(\LOSD(\bX)\) 
and, in this specific case, the different notions of positive dependence are linked by the following chain of implications:
$$
\dMTP(\bX) 
 \Longrightarrow  
 \cLOsMTP(\bX) 
 \Longleftrightarrow
 \LOSD(\bX) 
 \Longrightarrow
 \LOMTP(\bX) 
 \Longleftrightarrow
 \LTD(\bX) 
 \Longrightarrow 
 \PLOD(\bX).
$$
Since the connections in the bivariate case are already well understood, we focus on dimensions \(d \geq 3\).
Moreover, by Remark \ref{Rem:Copula}, it suffices to consider the probability integral transform \(\bU = (F_{X_1}(X_1), \dots, F_{X_d}(X_d))\), whose joint distribution function is the copula \(C_\bX\). 

Let \(\psi: [0,\infty) \to [0,1]\) be a continuous, non-increasing function that is strictly decreasing on \([0,t_0]\) with \(t_0:= \inf\{t: \psi(t)=0\}\) and such that \(\psi(0)=1\) and \(\lim_{t \to \infty}\psi(t)=0 =: \psi(\infty)\). 
An Archimedean copula \(C_{\psi}: \bbI^d \to \bbI\), \(d \geq 2\), then is a copula generated by the function \(\psi\), that is, for every \(\bu \in \bbI^d\)
\begin{align*}
  C_{\psi}(\bu) 
  & = \psi \big(\varphi(u_1) + \dots + \varphi(u_d)\big)\,,
\end{align*}
where \(\varphi: \bbI \to  [0,\infty]\) denotes the pseudo-inverse of \(\psi\) given by 
\begin{align*}
  \varphi(t) 
  & = \begin{cases}
      \psi ^{-1}( t), & t\in  (0,1],\\
      t_0,            & t=0.
\end{cases}
\end{align*}
The function \(\varphi\) is continuous, strictly decreasing with \(\varphi(1) = 0\).
\SF{If \(\varphi(0) = \infty\), then \(\psi\) is called strict, and, by convention, so is \(C_{\psi}\); otherwise, both are called non-strict.}
We refer to \(\psi\) as the generator of \(C_{\psi}\).

According to \cite[Theorem 2.2]{mcneil_multivariate_2009}, the copula \(C_{\psi}\) is Archimedean if and only if the generator \(\psi\) is $d$-monotone on \([0,\infty)\), that is, 
\begin{enumerate}[(i)]
    \item it is \((d-2)\)-times differentiable,
    \item the derivatives satisfy 
\begin{align}\label{Def:dMon}
    (-1)^k \, \psi^{(k)}(t) \geq 0\,, \qquad k \in \{0,1,\dots,d-2\},
\end{align}
for every \(t \in (0,\infty)\), and 
    \item \((-1)^{d-2} \psi^{(d-2)}\) is non-increasing and convex on \([0,\infty)\).
\end{enumerate}
If the generator has derivatives of all orders and the inequality in \eqref{Def:dMon} holds for all \(k \in \mathbb{N}\), then \(\psi\) is said to be \emph{completely monotone}.



Before characterizing \(\cLOMTP(\bU)\) for \(\bU\) distributed according to an Archimedean copula in terms of the generator \(\psi\), in Corollary \ref{Cor:DepProp} we first recall known (or by Lemma \ref{lem:logconvex} straightforward) characterizations of the remaining dependence properties; see, e.g.~\cite{nelsen_introduction_2007, fuchs_total_2023, muller_archimedean_2005}. 
Recall that every \(\PLOD(\bU)\) random vector \(\bU\) that is distributed according to the Archimedean copula \(C_\psi\) fulfills \(C_\psi(\bu) \geq \Pi(\bu) > 0\) for all \(\bu \in (0,1]^d\) which implies that \(C_\psi\) is strict, hence \(\psi|_{(0,\infty)} >0\).
In what follows, we may therefore restrict to strict Archimedean copulas. 
A function \(f: [0,\infty) \mapsto (0,\infty)\) is said to be log-convex if \(\log \circ f|_{(0,\infty)}\) is convex.

\begin{corollary}\label{Cor:DepProp}
Let \(\bU\) be distributed according to the strict Archimedean copula \(C_\psi\) with generator \(\psi\) (sufficiently often differentiable).
Then 
\begin{enumerate}[(1)]
    \item\label{Cor:DepProp:1} \(\dMTP(\bU)\) if and only if \((-1)^d\psi^{(d)}\) is log-convex.
    \item\label{Cor:DepProp:2} \(\CIS(\bU)\) if and only if \(\CI(\bU)\) if and only if \((-1)^{d-1} \psi^{(d-1)}\) is log-convex.
    \item\label{Cor:DepProp:3} \(\LOMTP(\bU)\) if and only if \(\LTD(\bU)\) if and only if \(\psi\) is log-convex.
    \item\label{Cor:DepProp:5} \(\PLOD(\bU)\) if and only if \(\log \circ \psi\) is superadditive.
\end{enumerate}
\end{corollary}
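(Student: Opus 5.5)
Most of the assertions in Corollary \ref{Cor:DepProp} are known from \cite{muller_archimedean_2005} and \cite{fuchs_total_2023}. My plan is to prove all four by writing the relevant functions explicitly in terms of \(\psi\) and \(\varphi\), and then reducing each property to a log-convexity or superadditivity statement via Lemma \ref{lem:logconvex}. Throughout I substitute \(t_i = \varphi(u_i) \in [0,\infty)\). Since \(\varphi\) is a strictly decreasing bijection \((0,1] \to [0,\infty)\), the componentwise order on \(\bu\) turns into the reversed componentwise order on \(\mathbf{t}\). The \(\mtp\) inequality is symmetric under reversing all coordinates, so it is preserved by this substitution.

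For \eqref{Cor:DepProp:3}, the distribution function is \(C_\psi(\bu) = \psi(t_1+\dots+t_d)\). The \(\mtp\) inequality for \(\mathbf{t} \mapsto \psi(\sum_i t_i)\) reads
\[
\psi(s_1)\psi(s_2) \le \psi(s_{\min})\psi(s_{\max}),
\]
where the four sums satisfy \(s_{\min} \le s_1, s_2 \le s_{\max}\) and \(s_1+s_2 = s_{\min}+s_{\max}\). This is exactly log-convexity of \(\psi\); I expect Lemma \ref{lem:logconvex} to package precisely this equivalence. For the converse direction, choosing two coordinates suffices to realize arbitrary such quadruples. The equivalence with \(\LTD(\bU)\) then follows. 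One direction is Corollary \ref{Cor.LTD.RTI.POD}. For the other, \(\LTD\) with \(|A|=1\) says that \(u_1 \mapsto C(\bu)/C(u_1,\mathbf{1})\) is non-increasing, i.e.\ \(t \mapsto \psi(t+s)/\psi(t)\) is non-decreasing for every \(s\), which is again log-convexity.

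For \eqref{Cor:DepProp:1}, the copula density is
\[
c(\bu) = \psi^{(d)}\big(\textstyle\sum_i t_i\big)\prod_i \varphi'(u_i).
\]
The product factor is a function of separate coordinates and cancels in the \(\mtp\) inequality. The same sum argument as above then gives the claim with \(\psi\) replaced by \((-1)^d\psi^{(d)}\).

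For \eqref{Cor:DepProp:2}, I use that conditional distributions of Archimedean copulas given \(\bU_A=\bu_A\), \(|A|=k\), have the form
\[
\frac{\psi^{(k)}(s_A+s_B)}{\psi^{(k)}(s_A)},
\]
where \(s_A = \sum_{i \in A} t_i\) and \(s_B = \sum_{i \in B} t_i\). \(\CI\) requires this to be non-increasing in \(s_A\) for \(|B|=1\) and all \(k \le d-1\). That amounts to log-convexity of \((-1)^k\psi^{(k)}\) for \(k \le d-1\), and log-convexity at \(k=d-1\) propagates to smaller \(k\) by integration. \(\CIS\) requires the same condition only along the sequence \(k=1,\dots,d-1\), so the two properties coincide. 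Here I would cite \cite{muller_archimedean_2005} for the propagation step; this propagation (log-convexity of \((-1)^{k+1}\psi^{(k+1)}\) implies that of \((-1)^k\psi^{(k)}\)) is the main technical obstacle. To prove it directly, I would show that \(-\psi^{(k)}(t)=\int_t^\infty (-1)^{k+1}\psi^{(k+1)}\) is log-convex, since tail integrals of log-convex functions are log-convex.

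For \eqref{Cor:DepProp:5}, \(\PLOD\) means
\[
\psi\Big(\textstyle\sum_i t_i\Big) \ge \prod_i \psi(t_i),
\]
i.e.\ \(-\log\psi\) is subadditive, equivalently \(\log\circ\psi\) is superadditive. The \(d\)-term version follows from the two-term one by induction, and the two-term version follows from the \(d\)-term one by setting the other \(t_i=0\).
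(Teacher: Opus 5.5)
Your proposal is correct, but it argues differently from the paper, whose proof is essentially by citation: (1) and (2) are taken from Theorems 2.11 and 2.8 of \cite{muller_archimedean_2005}, (3) is declared immediate from Lemma \ref{lem:logconvex}, and (4) is called straightforward. You instead derive (1) and (2) directly. In the variables $t_i=\varphi(u_i)$ you write out the density and the conditional distributions \eqref{MKAC}, drop the separable factor $\prod_i|\varphi'(u_i)|$, and reduce everything to two elementary facts.

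The first fact is that $\mathbf{t}\mapsto g(\sum_i t_i)$ with $g>0$ is $\mtp$ if and only if $g$ is log-convex. As you anticipated, the implication from log-convexity follows from Lemma \ref{lem:logconvex}: for $a\le c\le d\le b$ with $b-d=c-a=h$, compare $g(a+h)/g(a)$ with $g(d+h)/g(d)$. The paper uses \cite[Corollary 1.4.3]{niculescu2004} for the same step in Theorem \ref{Cor:DepProp.ArchC.cMTP2}, and your two-coordinate construction gives the converse. The second fact is Lemma \ref{lem:logconvex} itself, for the $\SI$ and $\LTD$ statements. Only the propagation from order $d-1$ to lower orders is borrowed, and you sketch that as well.

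This gives a self-contained proof running on the same mechanism as Theorem \ref{Cor:DepProp.ArchC.cMTP2}. It also explains why $\CI$ and $\CIS$ coincide: by exchangeability and closure of Archimedean copulas under margins, $\SI(\bU_B|\bU_A)$ depends only on $k=|A|$, and both properties range over all $k\in\{1,\dots,d-1\}$.

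The citation route is shorter, and it delegates bookkeeping that a written-out version of your argument must supply:
$(-1)^k\psi^{(k)}>0$ on $(0,\infty)$ for strict generators, so the logarithms are defined;
$\psi^{(k)}(t)\to0$ as $t\to\infty$;
the boundary points $u_i\in\{0,1\}$;
and the passage from an arbitrary a.e.\ version of the density or conditional distribution to the continuous one.

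Finally, fix a sign slip. The tail-integral identity is $(-1)^k\psi^{(k)}(t)=\int_t^\infty(-1)^{k+1}\psi^{(k+1)}(s)\de s$; your left-hand side $-\psi^{(k)}(t)$ is correct only for odd $k$.
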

\begin{proof}
\eqref{Cor:DepProp:1} and \eqref{Cor:DepProp:2} are due to \cite[Theorem 2.11]{muller_archimedean_2005} and \cite[Theorem 2.8]{muller_archimedean_2005}, \eqref{Cor:DepProp:3} is immediate from Lemma \ref{lem:logconvex}, and 
\eqref{Cor:DepProp:5} is straightforward.
\end{proof}

We now characterize \(\cMTP(\bU)\) and \(\LOSD(\bU)\) and show that in the Archimedean setting both properties are equivalent. 
In addition to strictness of the Archimedean copula, we assume for convenience that \(\psi\) is at least \((d-1)\)--times differentiable. Under this differentiability assumption, the sign condition \eqref{Def:dMon} extends to order \(d-1\).


By \cite[Theorem 1]{kasper_convergence_2024}, (a version of) the regular conditional distribution of the random vector \(\bU = (\bU_A,\bU_B)\), where \(A\) and \(B\) form a partition of \(\{1,\dots,d\}\) with \(|A|=k\) and \(|B|=d-k\), and where \(\bU\) is distributed according to the strict Archimedean copula \(C_\psi\) with generator \(\psi\), is given, for \(\bu_B \in (0,1]^{d-k}\), by 
\begin{equation} \label{MKAC}
  \bbP(\bU_B \leq \bu_B \,|\, \bU_A = \bu_A) 
  = \begin{cases}
  1
  & \textrm{if } \min(\bu_A) = 1\,, \\
  \frac{\psi^{(k)} (\sum_{i \in A} \varphi(u_i) + \sum_{j \in B} \varphi(u_j) )}{\psi^{(k)} (\sum_{i \in A} \varphi(u_i))} 
  & \textrm{if } \min(\bu_A) \in (0,1)\,.
\end{cases}
\end{equation}
Since the marginals of Archimedean copulas are absolutely continuous \cite[Proposition 4.1]{mcneil_multivariate_2009}, the derivatives \(\psi^{(l)}\), \(l \in \{1,\dots,d-1\}\), are absolutely continuous on \((0,\infty)\) and the Lebesgue density \(c_A\) of \(\bU_A\) exists \cite[Proposition 4.2]{mcneil_multivariate_2009} so that, due to disintegration, 
\begin{align*}
    \bbP(\bU_B \leq \bu_B \,|\, \bU_A = \bu_A) \, c_A(\bu_A) 
    & = 
        \frac{\psi^{(k)} (\sum_{i \in A} \varphi(u_i) + \sum_{j \in B} \varphi(u_j) )}{\prod_{i \in A} \psi^{'} (\varphi(u_i))} 
\end{align*}
for all \((\bu_A,\bu_B) \in (0,1)^k \times (0,1]^{d-k}\).

\begin{theorem} \label{Cor:DepProp.ArchC.cMTP2}
Let \(\bU\) be distributed according to the strict Archimedean copula \(C_\psi\) whose generator \(\psi\) is \((d-1)\)--times differentiable. Further, let \(A\) and \(B\) form a partition of \(\{1,\dots,d\}\) with \(|A|=k\) and \(|B|=d-k\).
Then the following statements are equivalent: 
\begin{enumerate}[(a)]
\item \label{Cor:DepProp.ArchC.cMTP2:1} \(\cLOsMTP (\bU_B|\bU_A)\).
\item \label{Cor:DepProp.ArchC.cMTP2:2} \(\LOSD(\bU_B|\bU_A)\).
\item \label{Cor:DepProp.ArchC.cMTP2:3} \((-1)^k \psi^{(k)}\) is log-convex. 
\end{enumerate}
\end{theorem}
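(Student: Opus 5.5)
We prove the cycle \eqref{Cor:DepProp.ArchC.cMTP2:1} \(\Rightarrow\) \eqref{Cor:DepProp.ArchC.cMTP2:2} \(\Rightarrow\) \eqref{Cor:DepProp.ArchC.cMTP2:3} \(\Rightarrow\) \eqref{Cor:DepProp.ArchC.cMTP2:1}. The first implication is Theorem \ref{Thm.MTP2.SI}\eqref{Thm.MTP2.SI:1}, so only the last two need work. Throughout we use the representation derived above,
\begin{align*}
  H(\bu_A,\bu_B) := \bbP(\bU_B \leq \bu_B \,|\, \bU_A = \bu_A) \, c_A(\bu_A)
  = \frac{g\big(\sum_{i \in A} \varphi(u_i) + \sum_{j \in B} \varphi(u_j)\big)}{\prod_{i \in A} (-\psi'(\varphi(u_i)))}\,,
\end{align*}
where \(g := (-1)^k \psi^{(k)} \geq 0\). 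The signs work out because \((-1)^k\) cancels between numerator and denominator up to the factor \((-1)^{k}\cdot(-1)^{k}=1\). We also use the substitution \(t_i := \varphi(u_i)\), which is strictly decreasing in \(u_i\). Since the map \(\bu \mapsto \bu\) composed with a coordinatewise decreasing bijection turns \(\land\) into \(\lor\), the \(\mtp\) inequality is invariant under it. The product \(\prod_{i\in A}(-\psi'(t_i))^{-1}\) is a function of separate variables, so it factors out of both sides of \eqref{MTP2} and can be ignored. Hence \(\cLOMTP(\bU_B|\bU_A)\) reduces to the statement that \((\bt_A,\bt_B) \mapsto g(\sum_i t_i)\) is \(\mtp\) on \((0,\infty)^{k}\times[0,\infty)^{d-k}\).

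\emph{Step \eqref{Cor:DepProp.ArchC.cMTP2:3} \(\Rightarrow\) \eqref{Cor:DepProp.ArchC.cMTP2:1}.} A function \(\bt \mapsto g(\sum_i t_i)\) with \(g\) log-convex is \(\mtp\). Writing \(s = \sum x_i\), \(s' = \sum y_i\), \(m = \sum (x_i \land y_i)\) and \(M = \sum (x_i\lor y_i)\), we have \(m \leq \min(s,s') \leq \max(s,s') \leq M\) and \(m+M = s+s'\). Convexity of \(\log g\) then gives \(\log g(s)+\log g(s') \leq \log g(m)+\log g(M)\), which is the standard majorization inequality; the same argument underlies Lemma \ref{lem:logconvex}. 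This must also cover the case where \(g\) vanishes, but a log-convex \(g\) is strictly positive on \((0,\infty)\) by the definition used in the paper. The boundary cases \(\min(\bu_A)=1\) or \(u_j=0\) are handled by continuity or trivially, since both sides of the inequality then vanish or the conditional distribution equals \(1\).

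\emph{Step \eqref{Cor:DepProp.ArchC.cMTP2:2} \(\Rightarrow\) \eqref{Cor:DepProp.ArchC.cMTP2:3}.} This is the main obstacle. \(\LOSD(\bU_B|\bU_A)\) says that, for every fixed \(\bu_B\), the ratio \(g(s+r)/g(s)\) with \(s=\sum_{i\in A}t_i\) and \(r=\sum_{j\in B}\varphi(u_j)\) is non-increasing in each \(u_i\), \(i\in A\). Equivalently, it is non-decreasing in \(s\) for almost every \(s\in(0,\infty)\) and every \(r\geq 0\), since \(r\) ranges over all of \([0,\infty)\). Two issues need attention. First, we must upgrade from almost everywhere to everywhere using continuity of \(g=(-1)^k\psi^{(k)}\); here \(k\leq d-1\), so \(\psi^{(k)}\) exists and is continuous (for \(k=d-1\) we use that it is monotone, hence we take one-sided limits). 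Second, we must show \(g>0\) on \((0,\infty)\). A nonnegative non-increasing \(g\) that vanishes at some \(t_1\) would vanish on \([t_1,\infty)\), and then the ratio for \(s<t_1<s+r\) would be \(0\) while being positive for smaller \(r\), which combined with monotonicity in \(s\) forces a contradiction with strictness of \(\psi\). Once \(g>0\), the fact that \(s\mapsto \log g(s+r)-\log g(s)\) is non-decreasing for all \(r>0\) is exactly convexity of \(\log g\) on \((0,\infty)\), by the midpoint characterization together with continuity. This yields \eqref{Cor:DepProp.ArchC.cMTP2:3} and closes the cycle.
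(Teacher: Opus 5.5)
Your proposal is correct and follows the paper's route. The cycle is (a)$\Rightarrow$(b) via Theorem~\ref{Thm.MTP2.SI}\eqref{Thm.MTP2.SI:1}, then (b)$\Rightarrow$(c) via the monotone-ratio characterization of log-convexity (Lemma~\ref{lem:logconvex}), then (c)$\Rightarrow$(a) by cancelling the product-form density factor and applying the two-point majorization inequality to the convex function $\log g$.

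Your extra care about almost-everywhere monotonicity and positivity of $g$ goes beyond the paper. Positivity of $g=(-1)^k\psi^{(k)}$ on $(0,\infty)$ follows more directly from strictness of $\psi$, since a zero would force $\psi\equiv 0$ on a half-line. Your contradiction argument there should also compare different values of $s$ for fixed $r$, not different values of $r$.
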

\begin{proof}
Due to Theorem \ref{Thm.MTP2.SI}\eqref{Thm.MTP2.SI:1}, \eqref{Cor:DepProp.ArchC.cMTP2:1} implies \eqref{Cor:DepProp.ArchC.cMTP2:2}.
Now, assume that \eqref{Cor:DepProp.ArchC.cMTP2:2} holds.
Then, by definition, the mapping \((0,1)^k \mapsto [0,1]\) given by
\begin{align*}
    \bu_A
    & \mapsto \bbP(\bU_B \leq \bu_B \,|\, \bU_A = \bu_A)
      = \frac{\psi^{(k)} (\sum_{i \in A} \varphi(u_i) + \sum_{j \in B} \varphi(u_j) )}{\psi^{(k)} (\sum_{i \in A} \varphi(u_i))}
      \text{ is non-increasing for all } \bu_B \in (0,1]^{d-k}.
\end{align*}
Log-convexity of \((-1)^k \psi^{(k)}\) now follows from Lemma \ref{lem:logconvex}.
\\
Finally, assume that \eqref{Cor:DepProp.ArchC.cMTP2:3} holds, 
and set \(h := \log \circ ((-1)^k \psi^{(k)})\).
Then, for \(\bu_A, \bv_A \in (0,1)^k\) and \(\bu_B, \bv_B \in (0,1]^{d-k}\), we obtain \((-1)^k \psi^{(k)} \big(\sum_{i \in A}  \varphi(u_i \wedge v_i) + \sum_{j \in B} \varphi(u_j \wedge v_j) \big) > 0\) and
\begin{align} \label{Cor:DepProp.ArchC.cMTP2:P1}
    & \log \left( 
    \frac{\bbP(\bU_B \leq \bu_B \wedge \bv_B \,|\, \bU_A = \bu_A \wedge \bv_A) \, c_A(\bu_A \wedge \bv_A) \, 
          \bbP(\bU_B \leq \bu_B \vee \bv_B \,|\, \bU_A = \bu_A \vee \bv_A) \, c_A(\bu_A \vee \bv_A)}
         {\bbP(\bU_B \leq \bu_B \,|\, \bU_A = \bu_A) \, c_A(\bu_A) \, \bbP(\bU_B \leq \bv_B \,|\, \bU_A = \bv_A) \, c_A(\bv_A)} \right) \notag
    \\
    &   =  \log \left( 
    \frac{\psi^{(k)} (\sum_{i \in A} \varphi(u_i \wedge v_i) + \sum_{j \in B} \varphi(u_j \wedge v_j) )}
         {\psi^{(k)} (\sum_{i \in A} \varphi(u_i) + \sum_{j \in B} \varphi(u_j))} 
    \frac{\psi^{(k)} (\sum_{i \in A} \varphi(u_i \vee v_i) + \sum_{j \in B} \varphi(u_j \vee v_j) )}
         {\psi^{(k)} (\sum_{i \in A} \varphi(v_i) + \sum_{j \in B} \varphi(v_j))} \right) \notag
    \\
    & \qquad + \log \left( 
    \frac{\prod_{i \in A} \psi^{'} (\varphi(u_i))} 
         {\prod_{i \in A} \psi^{'} (\varphi(u_i \wedge v_i))} 
    \frac{\prod_{i \in A} \psi^{'} (\varphi(v_i))} 
         {\prod_{i \in A} \psi^{'} (\varphi(u_i \vee v_i))} 
    \right) \notag
    \\
    &   =  \log \left( 
    \frac{\psi^{(k)} (\sum_{i \in A} \varphi(u_i \wedge v_i) + \sum_{j \in B} \varphi(u_j \wedge v_j) )}
         {\psi^{(k)} (\sum_{i \in A} \varphi(u_i) + \sum_{j \in B} \varphi(u_j))} 
    \frac{\psi^{(k)} (\sum_{i \in A} \varphi(u_i \vee v_i) + \sum_{j \in B} \varphi(u_j \vee v_j) )}
         {\psi^{(k)} (\sum_{i \in A} \varphi(v_i) + \sum_{j \in B} \varphi(v_j))} \right) 
         + \log(1)
         \notag
    \\
    &   =  
    h \left( \underbrace{\sum_{i \in A} \varphi(u_i \wedge v_i) + \sum_{j \in B} \varphi(u_j \wedge v_j) }_{=:b} \right) + 
    h \left( \underbrace{\sum_{i \in A} \varphi(u_i \vee v_i) + \sum_{j \in B} \varphi(u_j \vee v_j)}_{=:a} \right) \notag
    \\
    & \quad  
    -
    h \left( \underbrace{\sum_{i \in A} \varphi(u_i) + \sum_{j \in B} \varphi(u_j)}_{=:c} \right) - 
    h \left( \underbrace{\sum_{i \in A} \varphi(v_i) + \sum_{j \in B} \varphi(v_j)}_{=:d} \right)\,. 
\end{align}
Since \(h\) is convex,
\( a \leq \min \left\{c,d\right\} \leq \max \left\{c,d\right\} \leq b\), and 
\(a+b=c+d\), it follows from \cite[Corollary 1.4.3]{niculescu2004} that \eqref{Cor:DepProp.ArchC.cMTP2:P1} $\geq 0$. 
The boundary cases with \((\bu_A,\bu_B) \in [0,1]^k \backslash (0,1)^{k} \times [0,1]^{d-k} \backslash (0,1]^{d-k}\) follow by choosing appropriate versions of the regular conditional distribution and of the density. Recall that this set has \(\bbP^{\bU_A}\) measure zero.
Thus, \(\cLOsMTP (\bU_B|\bU_A)\) follows.
\end{proof}

The following result concludes the discussion and is an immediate consequence of Theorem \ref{Cor:DepProp.ArchC.cMTP2}, Corollary \ref{Cor:DepProp}, and the fact that log-convexity of \((-1)^{d-1} \psi^{(d-1)}\) implies that of \((-1)^{k} \psi^{(k)}\) for all \(k \in \{1,\dots,d-1\}\) \cite[Theorem 2.8]{muller_archimedean_2005}.
In the Archimedean setting considered here \(\LOSD(\bX) \Longleftrightarrow \CIS(\bX) \Longleftrightarrow \CI(\bX)\).

\begin{corollary}\label{Cor:DepProp.ArchC.SI}
Let \(\bU\) be distributed according to the strict Archimedean copula \(C_\psi\) whose generator \(\psi\) is \((d-1)\)--times differentiable.
Then the following statements are equivalent: 
\begin{enumerate}[(a)]
\item\label{Cor:DepProp.ArchC.SI:1} \(\cMTP(\bU)\).
\item\label{Cor:DepProp.ArchC.SI:2} \(\LOSD(\bU)\). 
\item\label{Cor:DepProp.ArchC.SI:3} \((-1)^{d-1} \psi^{(d-1)}\) is log-convex.
\end{enumerate}
\end{corollary}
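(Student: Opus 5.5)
The plan is to prove the cycle (a) $\Rightarrow$ (b) $\Rightarrow$ (c) $\Rightarrow$ (a), reducing everything to Theorem \ref{Cor:DepProp.ArchC.cMTP2} applied partition by partition. Two facts about the index $k=|A|$ drive the argument. First, $(-1)^{d-1}\psi^{(d-1)}$ being log-convex implies that $(-1)^k\psi^{(k)}$ is log-convex for every $k\in\{1,\dots,d-1\}$ \cite[Theorem 2.8]{muller_archimedean_2005}. Second, the partition with $|A|=d-1$ does occur among all partitions.

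\emph{Step (a) $\Rightarrow$ (b).} This is immediate from Corollary \ref{Cor.MTP2}\eqref{Cor.MTP2:2}. Alternatively, apply Theorem \ref{Thm.MTP2.SI}\eqref{Thm.MTP2.SI:1} to each partition $(A,B)$.

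\emph{Step (b) $\Rightarrow$ (c).} Suppose $\LOSD(\bU)$ holds. Then in particular $\LOSD(\bU_B|\bU_A)$ holds for the partition $A=\{1,\dots,d-1\}$, $B=\{d\}$, where $k=d-1$. The implication (b) $\Rightarrow$ (c) of Theorem \ref{Cor:DepProp.ArchC.cMTP2} with $k=d-1$ then gives log-convexity of $(-1)^{d-1}\psi^{(d-1)}$. The hypotheses of that theorem are satisfied: $\psi$ is $(d-1)$-times differentiable, which is exactly what is needed for $k=d-1$.

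\emph{Step (c) $\Rightarrow$ (a).} Assume $(-1)^{d-1}\psi^{(d-1)}$ is log-convex. Fix an arbitrary partition $(A,B)$ of $\{1,\dots,d\}$ into non-empty sets, with $k=|A|\in\{1,\dots,d-1\}$. By the monotonicity fact above, $(-1)^k\psi^{(k)}$ is log-convex. The implication (c) $\Rightarrow$ (a) of Theorem \ref{Cor:DepProp.ArchC.cMTP2} then yields $\cLOMTP(\bU_B|\bU_A)$. Since the partition was arbitrary, $\cMTP(\bU)$ follows by Definition \ref{Def:cMTP2}(2).

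\emph{Main obstacle.} The only non-formal ingredient is the transfer of log-convexity from order $d-1$ down to lower orders $k$. I take this from \cite[Theorem 2.8]{muller_archimedean_2005}, as the paper already does. If I had to justify it directly, I would argue in two parts:
\begin{enumerate}[(i)]
\item If $g:=(-1)^{j}\psi^{(j)}$ is positive, decreasing and log-convex, then its negative antiderivative $G(t)=\int_t^\infty g(s)\,\de s=(-1)^{j-1}\psi^{(j-1)}(t)$ is again log-convex. The equality uses $\psi^{(j-1)}(\infty)=0$, which holds by $d$-monotonicity.
\item Log-convexity of $G$ follows from the Cauchy--Schwarz type inequality $G\,g'\ge g^2$, equivalently $G\,G''\ge (G')^2$. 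I would obtain it by writing $G(t)=\int_0^\infty g(t+s)\,\de s$ and using that log-convexity is preserved under integration of a family of log-convex functions. Alternatively, it suffices to note that $s\mapsto g(t+s)/g(t)$ is increasing in $t$.
\end{enumerate}
Iterating this from $j=d-1$ down to $j=k+1$ gives the claim.
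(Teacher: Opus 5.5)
Your proposal is correct and follows the paper's route: it applies Theorem \ref{Cor:DepProp.ArchC.cMTP2} partition by partition, with $k=d-1$ for (b)$\Rightarrow$(c), and it uses the transfer of log-convexity from $(-1)^{d-1}\psi^{(d-1)}$ to every $(-1)^k\psi^{(k)}$, $k\le d-1$, from \cite[Theorem 2.8]{muller_archimedean_2005}. In your optional sketch of that transfer, the inequality should read $G\,(-g')\ge g^2$ because $G'=-g$, and the integral-of-translates argument $G(t)=\int_0^\infty g(t+s)\,\de s$ is the safer one, since $g=(-1)^{d-1}\psi^{(d-1)}$ need not be differentiable.
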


Fig.~\ref{Fig:DepProp.Arch.2} summarizes the findings of Subsection \ref{Subsect:Archimedean} for continuous random vectors \(\bX\) that are associated with a strict Archimedean copula. 
The stated implications and equivalences also hold in the bivariate case.
\SF{The class of Archimedean copulas with completely monotone generators is rich and includes the Clayton and Frank copulas with positive dependence parameters, as well as the Gumbel-Hougaard and Joe copulas, among many others; we refer to \cite[Chapter 4]{nelsen_introduction_2007} for further examples. As a consequence, these copulas are all \(\dMTP\) and hence \(\cMTP\).
By contrast, examples of positively dependent Archimedean copulas whose generators are not completely monotone appear to be relatively rare. 
Notable examples are provided in Example \ref{Ex.Arch.Counter} below and \cite[Remark 2.13]{muller_archimedean_2005}; these examples show that the reverse implications in Fig.~\ref{Fig:DepProp.Arch.2}, except for those explicitly indicated as equivalences, do not hold in general.}

\begin{example} \label{Ex.Arch.Counter}
\SF{For \(\alpha \in (0,\infty)\), the mapping \(\psi_\alpha: [0,\infty) \to [0,1]\) defined by
\begin{align*}
    \psi_\alpha(t)
    & := \frac{(1-t-\alpha) + \sqrt{(t+\alpha-1)^2 + 4\alpha}}{2}
\end{align*}
is an Archimedean generator, and the corresponding bivariate Archimedean copula \(C_\alpha\) coincides with family 16 in \cite[Table 4.2]{nelsen_introduction_2007}.
Then, \((\bU = U_1,U_2)\) distributed according to \(C_\alpha\) satisfies
\begin{itemize}
    \item \(\dMTP(\bU)\) if and only if \(\alpha \geq 3+2\sqrt{2}\),
    \item \(\cMTP(\bU)\) if and only if \(\alpha \geq 3\), and
    \item \(\MTP(\bU)\) if and only if \(\alpha \geq 1\).
\end{itemize}}
\end{example}

\vspace{-3mm}
\begin{figure}[ht] 
\begin{center}
$$
\begin{array}{ccccccccc}
 \textrm{completely} &&&&&&&&
 \\
 \textrm{monotone}  
 & \Longrightarrow & \dMTP(\bX) 
 & \Longrightarrow & \cLOsMTP(\bX)
 & \Longrightarrow & \LOMTP(\bX) 
 & \Longrightarrow & \PLOD(\bX)
 \\
 \textrm{generator} 
 &&&& \Updownarrow && \Updownarrow &&
 \\
 &&&& \LOSD(\bX) && \LTD(\bX) && 
\end{array}
$$
\end{center}
\caption{Interrelations among the different notions of positive dependence for a continuous random vector \(\bX\) whose copula is Archimedean. Arrows pointing from left to right indicate strict implications; that is, the property on the left is strictly stronger than, and therefore implies, the property on the right.}
\label{Fig:DepProp.Arch.2}
\end{figure}

\section{Application to Markov products and measures of directed dependence}
\label{Sec:markovproduct}

We now investigate stability results for Markov product transformations and, consequently, stability results within the framework of conditionally independent random vectors. We then highlight the implications of these results for the comparison of measures of directed dependence.

\subsection{Stability results for Markov product transformations}

Let \(\bX\) and \(\bY\) be 
random vectors of dimension \(p \geq 1\) and \(q \geq 1\), respectively, and let \(Z\) be a 
random variable. 
For the distribution functions \(F\) and \(G\) corresponding to \((\bX,Z)\) and \((\bY,Z)\), respectively, define the functions
\begin{align}
    F \star G \, (\bx,\by,z)
    & := \int_{(-\infty,z]} \bbP(\bX \leq \bx \,|\, Z = t) \, \bbP(\bY \leq \by \,|\, Z = t) \de \bbP^Z(t)
    \label{Def:MP.DepProp} \\
    F \ast G \, (\bx,\by)
    & := \lim_{z \to \infty} F \star G \, (\bx,\by,z). \label{Def:MP.DepProp2}
\end{align}
Both \(F \star G\) and \(F \ast G\) are distribution functions; while \(F \star G\) is \((p+q+1)\)--dimensional, \(F \ast G\) is \((p+q)\)--dimensional.
The distribution functions \(F \star G\) and \(F \ast G\) represent conditionally independent compositions of the random vectors \(\bX\) and \(\bY\) in the following sense: \((\bX,\bY,Z)\) has distribution function \(F \star G\) if and only if \(\bX\) and \(\bY\) are conditionally independent given \(Z\); cf. \cite[Lemma 3]{overbeck_multivariate_2015}.
In this case \((\bX,\bY)\) is distributed according to \(F \ast G\), i.e.~\((\bX,\bY) \sim F \ast G\).
Both distribution functions are suitable for modeling Markov processes \cite{fallat_total_2017,durante_principles_2015,overbeck_multivariate_2015}.
%
%

The first result connects the positive dependence properties examined in Section \ref{Sec:MTP2} of \((\bX,Z)\) and \((\bY,Z)\) with the corresponding property of the combined random vector \((\bX,\bY,Z)\), whose distribution function is \(F \star G\).
Assertion \eqref{Thm:MP.DepProp:1} of Theorem \ref{Thm:MP.DepProp} is taken from \cite[Proposition 7.1]{fallat_total_2017}.

\begin{theorem}\label{Thm:MP.DepProp}
Consider the random vector \((\bX,\bY,Z)\) such that \(\bX\) and \(\bY\) are conditionally independent given \(Z\).
Then
\begin{enumerate}[(1)]
\item\label{Thm:MP.DepProp:1} 
\(\dMTP(\bX,Z)\) and \(\dMTP(\bY,Z)\) if and only if \(\dMTP(\bX,\bY,Z)\).
\item\label{Thm:MP.DepProp:2} 
\(\cMTP(\bX|Z)\) and \(\cMTP(\bY|Z)\) if and only if \(\cMTP((\bX,\bY)|Z)\).
\item\label{Thm:MP.DepProp:3} 
\(\LOSD(\bX|Z)\) and \(\LOSD(\bY|Z)\) if and only if \(\LOSD((\bX,\bY)|Z)\).
\end{enumerate}
\end{theorem}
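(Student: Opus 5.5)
Assertion (1) is \cite[Proposition 7.1]{fallat_total_2017}, so only (2) and (3) need work. The key tool is conditional independence, which gives, for \(\bbP^Z\)-almost all \(z\), the factorization
\begin{align*}
  \bbP(\bX \leq \bx, \bY \leq \by \,|\, Z = z) = \bbP(\bX \leq \bx \,|\, Z = z)\,\bbP(\bY \leq \by \,|\, Z = z)
\end{align*}
for versions of the regular conditional distributions. For a single conditioning variable, the density \(f_Z\) with respect to \(\bbP^Z\) is identically \(1\). By Remark \ref{Rem.cMTP2:Ex}\eqref{Rem.cMTP2:Ex:2}, \(\cMTP((\bX,\bY)|Z)\) therefore means that \((z,\bx,\by) \mapsto \bbP(\bX\leq\bx,\bY\leq\by\,|\,Z=z)\) is \(\mtp\). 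Likewise, \(\cMTP(\bX|Z)\) and \(\cMTP(\bY|Z)\) mean that \(H_1(z,\bx) := \bbP(\bX\leq\bx\,|\,Z=z)\) and \(H_2(z,\by) := \bbP(\bY\leq\by\,|\,Z=z)\) are \(\mtp\).

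For the ``if'' direction of (2) and (3), I would use closure under marginalization. For (2), Corollary \ref{Cor.cMTP2:Margins} with \(A'=A=\{Z\}\) and \(B'\) equal to the \(\bX\)- or \(\bY\)-coordinates gives the claim directly; concretely, one lets \(\by \uparrow \boldsymbol{\infty}\) in the joint \(\mtp\) inequality. For (3), Remark \ref{Cor.SI:Margins} shows that \(\LOSD\) is closed under marginalization in \(B\). Both directions use only \(|A|=1\), so the failure of closure in \(A\) does not matter here.

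For the ``only if'' direction of (3), the factorization shows that \(z \mapsto H_1(z,\bx)H_2(z,\by)\) is a product of two nonnegative non-increasing functions, hence non-increasing off a null set. For (2), I would show that the product \(G(z,\bx,\by) := H_1(z,\bx)H_2(z,\by)\) is \(\mtp\) on \(\bbR^{1+p+q}\). Take \((z,\bx,\by)\) and \((z',\bx',\by')\). Then
\begin{align*}
  G(z,\bx,\by)\,G(z',\bx',\by') = \bigl[H_1(z,\bx)H_1(z',\bx')\bigr]\bigl[H_2(z,\by)H_2(z',\by')\bigr].
\end{align*}
Apply the \(\mtp\) inequality of \(H_1\) to the pair \((z,\bx),(z',\bx')\) and that of \(H_2\) to the pair \((z,\by),(z',\by')\). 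The shared \(z\)-coordinate has the same lattice operations \(z\wedge z'\) and \(z\vee z'\) in both factors, so the bound recombines to \(G((z,\bx,\by)\wedge(z',\bx',\by'))\,G((z,\bx,\by)\vee(z',\bx',\by'))\). This is the standard fact that products of \(\mtp\) functions, viewed on a common space through coordinate embeddings, are \(\mtp\). Since \(G\) is a version of the conditional distribution of \((\bX,\bY)\) given \(Z\), this yields \(\cMTP((\bX,\bY)|Z)\).

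The main obstacle is the version issue. The \(\mtp\) inequalities for \(H_1\) and \(H_2\) hold for particular versions, and the factorization holds only \(\bbP^Z\)-a.s. I would resolve this by defining the version of the joint conditional distribution to be exactly the product \(H_1H_2\) of the chosen \(\mtp\) versions. This is legitimate because that product is a regular conditional distribution of \((\bX,\bY)\) given \(Z\): for each \(z\) it is the distribution function of a product measure, it is measurable in \(z\), and it agrees a.s. with \(\bbP(\bX\leq\bx,\bY\leq\by\,|\,Z)\) by conditional independence. The same version choice handles (3).
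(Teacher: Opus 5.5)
Your proposal is correct and follows essentially the same route as the paper. Conditional independence factorizes the conditional distribution function; the product of the two $\mtp$ functions sharing the coordinate $z$ is again $\mtp$, which the paper cites from Karlin and Rinott and you verify directly. The reverse directions of (2) and (3) then follow from closure under marginalization in $B$, and choosing $H_1H_2$ as the version of the joint regular conditional distribution makes explicit a point the paper leaves implicit.
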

\begin{proof}
It remains to prove \eqref{Thm:MP.DepProp:2} and \eqref{Thm:MP.DepProp:3}.
Therefore, assume \(\cMTP(\bX|Z)\) and \(\cMTP(\bY|Z)\) and define 
\(f(\bx,z) := \bbP(\bX \leq \bx \,|\, Z = z)\) and
\(g(\by,z) := \bbP(\bY \leq \by \,|\, Z = z)\).
Then, by \eqref{Def:cMTP2:Eq}, \(f\) and \(g\) are \(\mtp\).
\(\cMTP((\bX,\bY)|Z)\) now follows from \eqref{Def:cMTP2:Eq} and the fact that the conditional distribution function of \((\bX,\bY)\) given \(Z\) satisfies
\begin{align}\label{Thm:MP.DepProp:Eq1}
  \bbP((\bX,\bY) \leq (\bx,\by) \,|\, Z = z)   
   & = \bbP(\bX \leq \bx \,|\, Z = z) \, \bbP(\bY \leq \by \,|\, Z = z)
     = f(\bx,z) \, g(\by,z)
\end{align}
for \(\bbP^Z\)-almost all \(z \in \bbR\) and all \(\bx,\by\). The latter product \(fg\), however, is \(\mtp\) due to \cite[Proposition 3.3]{karlin_1980}.
Since \(\bX\) and \(\bY\) are the marginals of \((\bX,\bY)\), the reverse direction in \eqref{Thm:MP.DepProp:2} follows from Corollary \ref{Cor.cMTP2:Margins}.
The forward direction in \eqref{Thm:MP.DepProp:3} is due to \eqref{Thm:MP.DepProp:Eq1}, and the reverse direction follows from Remark \ref{Cor.SI:Margins}\eqref{Cor.SI:Margins:2}.
\end{proof}

Theorem \ref{Thm:MP.DepProp}\eqref{Thm:MP.DepProp:1} is specific to the case where $Z$ is one-dimensional. 
It does not extend to vector-valued $Z$; see \cite[Example 7.2]{fallat_total_2017}.
The next example shows that the same limitation applies to Theorem \ref{Thm:MP.DepProp}\eqref{Thm:MP.DepProp:2} as well. 
In particular, Example \ref{Ex:MP.DepProp} demonstrates that imposing additional dependence assumptions on \(Z\) does not remedy the failure of the result in the vector-valued setting.
By contrast, Theorem \ref{Thm:MP.DepProp}\eqref{Thm:MP.DepProp:3} remains valid for vector-valued \(Z\); this is immediate from the vector-valued analogue of \eqref{Thm:MP.DepProp:Eq1} together with Remark \ref{Cor.SI:Margins}\eqref{Cor.SI:Margins:2}.

\begin{example} \label{Ex:MP.DepProp}
Consider the binary random vector \(\bX = (X_1,\dots,X_4) \in \{0,1\}^4\) with the following strictly positive probabilities:
\begin{align*}
  (p_{0000},p_{1000},p_{0100},p_{1100},p_{0001},p_{1001},p_{0101},p_{1101})
  & = (1080,270,120,30,980,420,420,180)/6500,
  \\
  (p_{0010},p_{1010},p_{0110},p_{1110},p_{0011},p_{1011},p_{0111},p_{1111})
  & = (480,320,120,80,400,600,400,600)/6500.
\end{align*}
Then \(X_1\) and \(X_2\) are conditionally independent given \((X_3,X_4)\), and the joint distribution of \((X_1,X_2,X_3,X_4)\) is obtained as the Markov product of the distributions of \((X_1,X_3,X_4)\) and \((X_2,X_3,X_4)\), using the vector-valued analogue of \eqref{Def:MP.DepProp}.
A direct calculation shows that both \(\cMTP(X_1|(X_3,X_4))\) and \(\cMTP(X_2|(X_3,X_4))\) hold, however,
\begin{align*}
    & \bbP((X_1,X_2) \leq (0,0) \,|\, (X_3,X_4) = (0,1)) \, f_{(X_3,X_4)} (0,1)
    \; \bbP((X_1,X_2) \leq (0,0) \,|\, (X_3,X_4) = (1,0)) \, f_{(X_3,X_4)} (1,0)
    \\
    & = \frac{1183}{6250} > \frac{1521}{8750}
    \\
    & = \bbP((X_1,X_2) \leq (0,0) \,|\, (X_3,X_4) = (0,0)) \, f_{(X_3,X_4)} (0,0)
    \; \bbP((X_1,X_2) \leq (0,0) \,|\, (X_3,X_4) = (1,1)) \, f_{(X_3,X_4)} (1,1)\,,
\end{align*}
hence \(\cMTP((X_1,X_2)|(X_3,X_4))\) fails, despite the fact that $\dMTP(X_3,X_4)$ holds.
\end{example}

As a consequence of Theorem \ref{Thm:MP.DepProp}, we now relate the dependence properties of \((\bX,Z)\) and \((\bY,Z)\) to that of the random vector \((\bX,\bY)\).

\begin{corollary}\label{Cor:MP.DepProp}
Consider the random vector \((\bX,\bY,Z)\) such that \(\bX\) and \(\bY\) are conditionally independent given \(Z\).
Then
\begin{enumerate}[(1)]
\item\label{Cor:MP.DepProp:1} \(\dMTP(\bX,Z)\) and \(\dMTP(\bY,Z)\) imply \(\dMTP(\bX,\bY)\).
\item\label{Cor:MP.DepProp:2} \(\cMTP(\bX|Z)\) and \(\cMTP(\bY|Z)\) imply \(\MTP(\bX,\bY)\).\pagebreak
\end{enumerate}
\end{corollary}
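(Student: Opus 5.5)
The plan is to deduce both assertions from Theorem \ref{Thm:MP.DepProp}, combined with closure under marginalization and the implications collected in Section \ref{Sec:MTP2}.

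For \eqref{Cor:MP.DepProp:1}, Theorem \ref{Thm:MP.DepProp}\eqref{Thm:MP.DepProp:1} applies directly: $\dMTP(\bX,Z)$ and $\dMTP(\bY,Z)$ give $\dMTP(\bX,\bY,Z)$. By Remark \ref{ClosedMargins.2}, $\dMTP$ is closed under marginalization (\cite[Lemma 2.12]{sarkar_1969}). Integrating out $Z$ then yields $\dMTP(\bX,\bY)$. Concretely, the density of $(\bX,\bY)$ with respect to the product of its marginals is $\int f_{(\bX,\bY,Z)}(\bx,\by,z)\de\bbP^Z(z)$. Since $\mtp$ functions are preserved under integrating out a variable with respect to a product-measure component, this density is again $\mtp$.

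For \eqref{Cor:MP.DepProp:2}, the first step is again Theorem \ref{Thm:MP.DepProp}, now part \eqref{Thm:MP.DepProp:2}. From $\cMTP(\bX|Z)$ and $\cMTP(\bY|Z)$ it gives $\cMTP((\bX,\bY)|Z)$. This is the property $\cLOMTP(\bX_B|\bX_A)$ for the vector $(\bX,\bY,Z)$ with $A=\{Z\}$ and $B$ the coordinates of $(\bX,\bY)$. Theorem \ref{Thm.MTP2}\eqref{Thm.MTP2:1} then gives $\MTP(\bX,\bY,Z)$, i.e.\ the joint distribution function $F\star G$ is $\mtp$. Finally, $\MTP$ is closed under marginalization (Remark \ref{ClosedMargins.2}). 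Letting $z\to\infty$ in the $\mtp$ inequality for $F\star G$, with both $z$-arguments sent to $+\infty$ so that $z\wedge z'$ and $z\vee z'$ both diverge, gives the $\mtp$ inequality for $F\ast G$. By \eqref{Def:MP.DepProp2}, $F\ast G$ is the distribution function of $(\bX,\bY)$, so $\MTP(\bX,\bY)$ follows.

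There is no real obstacle. The only point needing care is the limit passage in the marginalization step for \eqref{Cor:MP.DepProp:2}, which works by monotone convergence of distribution functions. For \eqref{Cor:MP.DepProp:1}, the corresponding point is that the marginal density relative to the product of the marginals is the integral of the joint density; this is already covered by the cited marginalization result.
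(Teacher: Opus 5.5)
Your proposal is correct and follows the paper's proof: part (1) combines Theorem~\ref{Thm:MP.DepProp}\eqref{Thm:MP.DepProp:1} with closure of $\dMTP$ under marginalization (Remark~\ref{ClosedMargins.2}), and part (2) chains Theorem~\ref{Thm:MP.DepProp}\eqref{Thm:MP.DepProp:2}, Theorem~\ref{Thm.MTP2}\eqref{Thm.MTP2:1}, and closure of $\MTP$ under marginalization, just as the paper does. Your additional remarks only spell out the marginalization steps: the marginal density written as $\int f \de \bbP^Z$, and letting $z = z' \to \infty$ in the $\mtp$ inequality for $F \star G$.
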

\begin{proof}
Assertion \eqref{Cor:MP.DepProp:1} is immediate from Theorem \ref{Thm:MP.DepProp}\eqref{Thm:MP.DepProp:1} and Remark \ref{ClosedMargins.2}.
Assertion \eqref{Cor:MP.DepProp:2} follows from Theorem \ref{Thm:MP.DepProp}\eqref{Thm:MP.DepProp:2}, Theorem \ref{Thm.MTP2}\eqref{Thm.MTP2:1} and Remark \ref{ClosedMargins.2}.
\end{proof}

For the case \(p=1=q\), Corollary \ref{Cor:MP.DepProp} can be improved as follows.

\begin{corollary}\label{Cor:MP.DepProp.d2}
Consider the random vector \((X,Y,Z)\) such that \(X\) and \(Y\) are conditionally independent given \(Z\).
Then
\begin{enumerate}[(1)]
\item\label{Cor:MP.DepProp.d2:1} \(\dMTP(X,Z)\) and \(\dMTP(Y,Z)\) imply \(\dMTP(X,Y)\).
\item\label{Cor:MP.DepProp.d2:2} \(\cMTP(X|Z)\) and \(\cMTP(Y|Z)\) imply \(\MTP(X,Y)\).
\item\label{Cor:MP.DepProp.d2:3} \(\LOSD(Z|X)\) and \(\LOSD(Y|Z)\) imply \(\LOSD(Y|X)\).
\item\label{Cor:MP.DepProp.d2:4} \(\LOSD(X|Z)\) and \(\LOSD(Z|Y)\) imply \(\LOSD(X|Y)\).
\item\label{Cor:MP.DepProp.d2:5} \(\LOSD(X,Z)\) and \(\LOSD(Y,Z)\) imply \(\LOSD(X,Y)\).
\end{enumerate}
\end{corollary}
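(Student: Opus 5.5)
Assertions (1) and (2) are the case \(p=q=1\) of Corollary \ref{Cor:MP.DepProp}, so nothing new is needed there. The work is in (3)--(5), which concern stochastic monotonicity of the Markov product \(F \ast G\) in the bivariate case. The main tool is the disintegration identity for conditionally independent \((X,Y,Z)\):
\begin{align*}
  \bbP(Y \leq y \,|\, X = x) = \int_{\bbR} \bbP(Y \leq y \,|\, Z = t) \de \bbP^{Z|X=x}(t)
\end{align*}
for \(\bbP^X\)-almost all \(x\). This holds because conditional independence gives \(\bbP(Y\le y\,|\,X,Z)=\bbP(Y\le y\,|\,Z)\) almost surely; the tower property then yields the display.

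For (3), fix \(y\). By \(\LOSD(Y|Z)\), the map \(t \mapsto g_y(t) := \bbP(Y \leq y \,|\, Z = t)\) is non-increasing (after choosing a suitable version). By \(\LOSD(Z|X)\), the conditional laws \(\bbP^{Z|X=x}\) increase in the usual stochastic order as \(x\) increases. Integrals of non-increasing bounded functions are non-increasing along the stochastic order. Concretely, write \(g_y\) as a mixture of indicators \(\mathds{1}_{(-\infty,s]}\) or \(\mathds{1}_{(-\infty,s)}\); each of these has a conditional probability that is non-increasing in \(x\). Hence \(x \mapsto \bbP(Y \leq y \,|\, X = x)\) is non-increasing, which is \(\LOSD(Y|X)\).

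Part (4) is (3) with the roles of \(X\) and \(Y\) interchanged. Conditional independence is symmetric, so \(\LOSD(Z|Y)\) and \(\LOSD(X|Z)\) give \(\LOSD(X|Y)\) by the same argument. For (5), recall that in dimension two \(\LOSD(X,Z)\) means both \(\LOSD(X|Z)\) and \(\LOSD(Z|X)\), and similarly for \((Y,Z)\). Then (3) gives \(\LOSD(Y|X)\) and (4) gives \(\LOSD(X|Y)\); together these are \(\LOSD(X,Y)\).

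The main obstacle is measure-theoretic care in the mixing step. Monotonicity of \(t\mapsto \bbP(Y\le y\,|\,Z=t)\) holds only \(\bbP^Z\)-almost everywhere and only for a version. We must also handle versions across uncountably many \(y\) and null sets of \(x\). I would first fix a version that is monotone in \(t\) for all rational \(y\), modifying it on a \(\bbP^Z\)-null set. Next I would extend to all \(y\) by right-continuity. Finally I would argue with the integral representation over the countable family, using that a non-increasing bounded function is a monotone limit of finite combinations of indicators of lower half-lines, to which the stochastic ordering applies directly.
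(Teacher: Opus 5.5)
Your proposal is correct. For (1) and (2) you do exactly what the paper does, namely specialize Corollary \ref{Cor:MP.DepProp} to \(p=q=1\).

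For (3)--(5) the paper gives no argument of its own. It attributes these assertions to \cite[Theorem 3.2]{joe1994}, with pointers to \cite[Corollary 4.1]{Siburg-2021} and \cite[Lemma B.2]{ansari2025}. You instead prove them directly:
\begin{itemize}
\item Conditional independence yields the mixture representation \(\bbP(Y\le y\,|\,X=x)=\int \bbP(Y\le y\,|\,Z=t)\de\bbP^{Z|X=x}(t)\).
\item \(\LOSD(Z|X)\) says the mixing laws increase in the usual stochastic order.
\item \(\LOSD(Y|Z)\) says the integrand is non-increasing.
\item The layer-cake decomposition into indicators of lower half-lines transfers the ordering to the integral.
\item Part (4) follows from the symmetry of conditional independence, and (5) by combining (3) and (4) through the two directions contained in bivariate \(\LOSD\).
\end{itemize}

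This is the standard argument behind such results, so no new idea is missing. Your route does buy two things over the bare citation. First, it is self-contained. Second, it handles the version and null-set issues explicitly. You modify \(t\mapsto\bbP(Y\le y\,|\,Z=t)\) on a \(\bbP^Z\)-null set, which is \(\bbP^{Z|X=x}\)-null for \(\bbP^X\)-almost all \(x\). You then pass to rational \(y\) and use right-continuity to obtain a single exceptional set. This matters because the paper defines \(\LOSD\) through versions of regular conditional distributions and states the corollary for arbitrary, not necessarily continuous, \((X,Y,Z)\). Copula-level versions of the result presuppose continuous margins. The paper's citation route is simply shorter.
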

\begin{proof}
\eqref{Cor:MP.DepProp.d2:1} and \eqref{Cor:MP.DepProp.d2:2} are special cases of Corollary \ref{Cor:MP.DepProp}, and \eqref{Cor:MP.DepProp.d2:3}, \eqref{Cor:MP.DepProp.d2:4} and \eqref{Cor:MP.DepProp.d2:5} are due to \cite[Theorem 3.2]{joe1994}; see also \cite[Corollary 4.1]{Siburg-2021} and \cite[Lemma B.2]{ansari2025}.
\end{proof}

Corollary \ref{Cor:MP.DepProp.d2} has important consequences, which we examine in the next subsection.

\subsection{Comparison results for measures of directed dependence}

Let us now apply Corollary \ref{Cor:MP.DepProp.d2} to a variant of the Markov product recently studied in the context of measures of directed dependence. 
Consider, therefore, a random vector \((Y,X)\) with non-degenerate \(Y\) and denote by \(Y'\) a conditionally independent copy of \(Y\) given \(X\,,\) i.e.
\begin{align} \label{MP:Def}
  (Y'\,|\,X=x) \eqd (Y\,|\,X=x) 
  \textrm{ for } \bbP^X\text{-almost all } x \in \bbR \text{ and } Y \perp Y' \mid X \,,
\end{align}
where \(\eqd\) indicates equality in distribution and \(Y \perp Y' \mid X\) denotes conditional independence of \(Y\) and \(Y'\) given \(X\). 
Then, according to \cite{fuchs2024JMVA,ansari2025Cont}, if \((Y,X)\) has distribution function \(F\), then \((Y,Y')\) has distribution function \(F \ast F\) as given in \eqref{Def:MP.DepProp2}.

The transformed random vector \((Y,Y')\) associated with \((Y,X)\) underlies several measures of directed dependence that quantify the influence that \(X\) has on \(Y\).
These include the recently introduced Chatterjee's rank correlation \(\xi\) \cite{chatterjee2020, siburg2013}
given by 
\begin{align*}
  \xi(Y,X) 
  & := \frac{\int_{\bbR} \operatorname{Var}(P(Y \geq y \mid X)) \, \de P^Y(y)}
            {\int_{\bbR} \operatorname{Var}(\mathds{1}_{\{Y \geq y\}}) \, \de P^Y(y)}
     =  \frac{\int_{\bbR} P(Y \geq y, Y' \geq y) - \bbP(Y \geq y)^2 \, \de P^Y(y)}
            {\int_{\bbR} \bbP(Y \geq y) - \bbP(Y \geq y)^2 \, \de P^Y(y)}\,,
\end{align*}
and the normalized Pearson's correlation ratio \(R^2\), introduced in \cite{pearson1905} and given by 
\begin{align*}
  R^2(Y,X) 
  & := \frac{\operatorname{Var}(\mathbb{E}(Y|X))}{\operatorname{Var}(Y)}
     = \rho_P(Y,Y')\,,
\end{align*}
where \(\rho_P\) denotes Pearson's correlation. \(R^2\) is also known as Sobol' index \cite{Sobol1993}.
While Chatterjee's \(\xi\) quantifies the scale-invariant extent of functional dependence of \(Y\) on \(X\), Pearson's correlation ratio \(R^2\) measures the fraction of variance explained by the regression function \(h(x) := \mathbb{E}(Y|X=x)\).

A fundamental property of \((Y,Y')\) is that
\begin{itemize}
\item \(Y\) and \(X\) are independent if and only if $Y$ and $Y'$ are independent,
\item \(\operatorname{Var}(\mathbb{E}(Y|X)) = 0\) if and only if $Y$ and $Y'$ are uncorrelated, and 
\item \(Y\) perfectly depends on \(X\), i.e.~$Y = f(X)$ almost surely for some measurable function \(f\), if and only if $Y = Y'$ almost surely;
\end{itemize}
we refer to \cite{fuchs2026MP} for a more detailed study of how dependence structures are transferred under this transformation.\pagebreak
Thus, \((Y,Y')\) preserves the key information about the strength of functional dependence of \(Y\) on \(X\).
Likewise, \(\xi(Y,X) = 0\) if and only if \(Y\) and \(X\) are independent,  \(R^2(Y,X) = 0\) if and only if \(\operatorname{Var}(\mathbb{E}(Y|X)) = 0\), and \(\xi(Y,X) = 1\) if and only if \(R^2(Y,X) = 1\) if and only if $Y$ perfectly depends on $X$.

Only in very few cases is it known how the transformation \(F \mapsto F \ast F\) affects certain distributions: For instance, if \((Y,X)\) is distributed according to a Gaussian distribution with correlation parameter \(\rho\), then \((Y,Y')\) is Gaussian as well with correlation parameter \(\rho^2\) \cite[Example 1]{fuchs2024JMVA}.
On the other hand, if \((Y,X)\) is lower semilinear, then \((Y,Y')\) is again lower semilinear \cite{maislinger2025, fuchs2025IJAR}.
By contrast, to the best of our knowledge, no closed-form expression for \(F \ast F\) is known when \((Y,X)\) follows an Archimedean copula or an extreme-value copula; we refer to \cite{durante_principles_2015} for more information on this class of copulas. 
Moreover, neither class is closed under the Markov product transformation. 
Nevertheless, Corollary \ref{Cor:MP.DepProp.d2}, together with the results of Subsection \ref{Subsect:Archimedean} and the fact that extreme-value copulas are stochastically increasing in both directions \cite{guillem2000}, yields stability results for their dependence properties.

Although \((F \ast F)(y,y) = P(Y \leq y, Y' \leq y) \geq P(Y \leq y) \, \bbP(Y' \leq y)\) for all \(y \in \bbR\), the random vector \((Y,Y')\) does not, in general, satisfy \(\PLOD(Y,Y')\) \cite[Example 2]{fuchs2024JMVA}. 
Therefore, it is important to better understand the stability of dependence properties under this specific Markov product.
According to Corollary \ref{Cor:MP.DepProp.d2}, we have that
\begin{itemize}
\item
\(\dMTP(Y,X)\) implies \(\dMTP(Y,Y')\).
\item
\(\cMTP(Y|X)\) implies \(\MTP(Y,Y')\).
\item
\(\LOSD(Y,X)\) implies \(\LOSD(Y,Y')\).
\end{itemize}
Interestingly, the sufficient condition for \(\LOSD(Y,Y')\) involves both directions \(\SI(Y|X)\) and \(\SI(X|Y)\), while \(\MTP(Y,Y')\) follows from the one-sided condition \(\cMTP(Y|X)\).
In particular, Corollary \ref{Cor:MP.DepProp.d2} implies that the Markov product associated with a random vector \((Y,X)\) whose copula is extreme-value satisfies \(\LOSD(Y,Y')\). It also shows that, if the copula of \((Y,X)\) is Archimedean with Archimedean generator \(\psi\), then the Markov product satisfies
\begin{itemize}
    \item \(\dMTP(Y,Y')\) whenever \(\psi{''}\) is log-convex,
    \item \(\MTP(Y,Y')\) and \(\LOSD(Y,Y')\) whenever \((-\psi)'\) is log-convex.
\end{itemize}
These conclusions hold without identifying the copula family to which the copula of \((Y,Y')\) belongs.

For a continuous random vector \((Y,X)\) whose probability integral transform 
\((F_Y(Y),F_X(X))\) is distributed according to copula \(C\) (interpreted as a distribution function with uniform margins), both Chatterjee's \(\xi\) and (the distribution-free) Pearson's correlation ratio \(R^2\) admit representations in terms of classical measures of concordance of the copula 
\(C \ast C\), namely Spearman's footrule \(\phi\) and Spearman's rho \(\rho_S\), respectively. More precisely \cite{fuchs2024JMVA},
\begin{align}\label{Rep:XiR2}
    \xi(Y,X) & = \xi(F_Y(Y),F_X(X)) = \phi(C \ast C)
    \qquad \textrm{ and } \qquad 
    R^2(F_Y(Y),F_X(X)) = \rho_S(C \ast C)\,.
\end{align}
We now show how the dependence properties inherent in \(C\) yield results concerning the relationship between the dependence measures \(\xi\) and \(R^2\).
To this end, we require the following positive dependence property, together with a general result comparing Spearman's footrule and Spearman's rho:
\SF{As in \cite{nelsen_introduction_2007}, the random vector \((Z_1,Z_2)\) is said to be \(\RTI(Z_2|Z_1)\),} \emph{right tail increasing}, if the mapping (whenever it exists)
\begin{equation*}
  z_1 \mapsto \bbP(Z_2 > z_2 \,|\, Z_1 > z_1) \text{ is non-decreasing for all } z_2 \in \bbR\,.
\end{equation*}
According to \cite{fuchs_total_2023,nelsen_introduction_2007},
\(\RTI(Z_2|Z_1)\) is implied by each of \(\MTP(Z_1,Z_2)\) and \(\LOSD(Z_2|Z_1)\).

\begin{theorem} \label{Thm:FootruleRho}
Let \((Z_1,Z_2)\) be a continuous random vector with copula \(C\).
If \(\LTD(Z_2|Z_1)\) and \(\RTI(Z_2|Z_1)\), then Spearman's rho is at least as large as Spearman's footrule, i.e., \(\rho_S(C) \geq \phi(C)\).
\end{theorem}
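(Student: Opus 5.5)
The plan is to reduce the inequality to a pointwise lower bound on \(C\) in terms of its diagonal section \(\delta(t) := C(t,t)\). By Remark~\ref{Rem:Copula} and continuity, I will work directly with \((U_1,U_2)\sim C\). There I use the standard representations
\begin{align*}
  \rho_S(C) = 12\int_0^1\!\!\int_0^1 C(u,v)\de u\de v - 3,
  \qquad
  \phi(C) = 6\int_0^1 \delta(t)\de t - 2 .
\end{align*}
So it suffices to show \(\int_0^1\!\int_0^1 C(u,v)\de u\de v \geq \tfrac12\int_0^1\delta(v)\de v + \tfrac1{12}\). I will fix \(v\in(0,1)\), split the inner integral over \(u\) into \(u\le v\) and \(u\ge v\), and use one of the two hypotheses on each piece. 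In both pieces the comparison point is the diagonal \(u=v\).

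\emph{Step 1 (region \(u\le v\), via \(\LTD\)).} \(\LTD(Z_2|Z_1)\) means \(u\mapsto C(u,v)/u\) is non-increasing on \((0,1]\). For \(u\le v\) this gives \(C(u,v)\ge u\,\delta(v)/v\). Integrating over \(u\in[0,v]\) yields \(\int_0^v C(u,v)\de u \ge v\,\delta(v)/2\).

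\emph{Step 2 (region \(u\ge v\), via \(\RTI\)).} \(\RTI(Z_2|Z_1)\) means \(u\mapsto (1-u-v+C(u,v))/(1-u)\) is non-decreasing on \([0,1)\). For \(u\ge v\) this gives
\begin{align*}
  C(u,v) \;\ge\; u+v-1 + (1-u)\,\frac{1-2v+\delta(v)}{1-v}.
\end{align*}
For the integral over \(u\in[v,1]\), note \(\int_v^1(u+v-1)\de u = (1-v)(3v-1)/2\), and the second term contributes \((1-v)(1-2v+\delta(v))/2\). The sum is \((1-v)(v+\delta(v))/2\).

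\emph{Step 3 (combine).} Adding the two pieces, for every \(v\in(0,1)\),
\begin{align*}
  \int_0^1 C(u,v)\de u \;\ge\; \tfrac12\bigl(v\,\delta(v)+(1-v)v+(1-v)\delta(v)\bigr) \;=\; \tfrac12\bigl(\delta(v)+v-v^2\bigr).
\end{align*}
Integrating in \(v\) gives \(\int\!\!\int C \ge \tfrac12\int\delta + \tfrac1{12}\). Substituting into the formulas above yields \(\rho_S(C)\ge 6\int\delta+1-3=\phi(C)\).

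The main obstacle is not the computation but choosing the right comparison. One has to see that \(\LTD\) must be used to the \emph{left} of the diagonal and \(\RTI\) to the \emph{right*}; the opposite choices give upper rather than lower bounds. One also has to check that the two resulting integrals combine exactly into \((\delta+v-v^2)/2\), which is what makes the constants \(-3\) and \(-2\) match. The remaining technical points are minor. The divisions by \(v\) and \(1-v\) are harmless, since only \(v\in(0,1)\) matters, a Lebesgue null complement. The conditional probabilities in the definitions exist because \(\bbP(U_1\le u)=u>0\) and \(\bbP(U_1>u)=1-u>0\) for \(u\in(0,1)\).
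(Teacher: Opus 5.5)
Your proof is correct and is essentially the paper's argument: split at the diagonal $u=v$, use $\LTD$ on $u\le v$ and $\RTI$ on $u\ge v$ to bound $C(u,v)$ below by its diagonal value, then integrate in $u$ and in $v$. The paper only differs in notation, working with $h_v(u)=C(u,v)-uv$; there your two bounds read $h_v(u)\ge \tfrac{u}{v}\,h_v(v)$ and $h_v(u)\ge \tfrac{1-u}{1-v}\,h_v(v)$, so the inner integral is at least $\tfrac12 h_v(v)$ directly and the constants $-3$ and $-2$ never need to be tracked.
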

\begin{proof}
For \(u,v \in [0,1]\), define \(h_{v}(u) := C(u,v) - uv\), and recall that Spearman's rho and Spearman's footrule \cite{nelsen_introduction_2007, genest_spearmans_2010} are given by
\begin{align*}
  \rho_S(C) 
  & = 12 \int_{(0,1)^2} h_{v}(u) \de \lambda^2(u,v)
  \qquad \textrm{and} \qquad 
  \phi(C) = 6 \int_{(0,1)} h_{v}(v) \de \lambda(v)\,,
\end{align*}
so that 
\begin{align*} 
  \rho_S(C) - \phi(C) 
  & = 6 \int_{(0,1)} 2 \, \left( \int_{(0,1)} h_{v}(u) \de \lambda(u) \right) - h_{v}(v) \de \lambda(v)\,,
\end{align*}
where \(\lambda\) denotes the Lebesgue measure on \((0,1)\).
Now, because \(\LTD(Z_2|Z_1)\) implies that the mapping 
\begin{align*}
  u & \mapsto \frac{h_{v}(u)}{u} = \frac{C(u,v)}{u} - v
\end{align*}
is non-increasing for all \(v \in \bbI\), and \(\RTI(Z_2|Z_1)\) implies that the mapping 
\begin{align*}
  u & \mapsto \frac{h_{v}(u)}{1-u} 
      = \frac{C(u,v)-v}{1-u} + \frac{v-uv}{1-u}
      = \frac{C(u,v)-v}{1-u} + v 
\end{align*}
is non-decreasing for all \(v \in \bbI\), we obtain 
\begin{align*}
  \frac{u}{v} \, h_{v}(v) 
  & \leq h_{v}(u) \qquad \textrm{ for all } 0 < u \leq v < 1,   
    \qquad \textrm{ and }  
  \\
  \frac{1-u}{1-v} \, h_{v}(v) 
  & \leq h_{v}(u) \qquad \textrm{ for all } 0 < v \leq u < 1. 
\end{align*}
This implies 
\begin{align*}
  \int_{(0,1)} h_{v}(u) \de \lambda(u) 
  & \geq \int_{(0,v]} \frac{u}{v} \; h_{v}(v) \de \lambda(u) + \int_{(v,1)} \frac{1-u}{1-v} \; h_{v}(v) \de \lambda(u) 
      =  \frac{1}{2} h_{v}(v)
\end{align*}
for all \(v \in (0,1)\), and eventually 
\(\rho_S(C) - \phi(C) \geq 0\).
\end{proof}

Analogously to Theorem \ref{Thm:FootruleRho}, \cite{fredricks_relationship_2007} proved that the positive dependence properties \(\LTD(Z_2|Z_1)\) and \(\RTI(Z_2|Z_1)\) also imply that Spearman's rho is at least as large as Kendall's tau.

According to \cite{fuchs_total_2023}, each of \(\LTD(Z_2|Z_1)\) and \(\RTI(Z_2|Z_1)\) implies \(\PLOD(Z_1,Z_2)\). The following example demonstrates that \(\PLOD(Z_1,Z_2)\) alone is not strong enough to imply the inequality in Theorem \ref{Thm:FootruleRho}.

\begin{example}
Let the copula of \((Z_1,Z_2)\) be the lower semilinear copula \(S_\delta\) given by
\begin{align*}
  S_{\delta}(u,v) 
  & = \begin{cases}
    u \frac{\delta(v)}{v}, & u < v, \\
    v \frac{\delta(u)}{u}, & u \geq v,
\end{cases}
\quad \textrm{where} \quad 
  \delta(t) := \begin{cases}
    at, & t \leq a,\\
    t^2, & t > a,
\end{cases}
\end{align*} 
for some \(a \in (0,1)\) \cite{durante2008}.
Evidently, \(\PLOD(Z_1,Z_2)\) holds, while  \(\rho_S(S_{\delta}) = a^4 < a^3 =\phi(S_{\delta})\) according to \cite[Example 2.1]{fuchs2025IJAR}. 
\end{example}

Theorem \ref{Thm:FootruleRho} enables a comparison between Chatterjee's rank correlation \(\xi(Y,X)\) and the distribution-free Pearson's correlation ratio \(R^2(F_Y(Y),F_X(X))\), expressed not only through dependence properties of \((Y,Y')\), but also through those of the original random vector \((Y,X)\).

\begin{corollary}\label{Cor:XiR2}
Let \((Y,X)\) be a continuous random vector and \(Y'\) be a conditionally independent copy of \(Y\) given \(X\) as defined in \eqref{MP:Def}. If \(\LTD(Y'|Y)\) and \(\RTI(Y'|Y)\), then
\begin{align} \label{Ineq:XiR2}
    R^2(F_Y(Y),F_X(X))
    & \geq \xi(F_Y(Y),F_X(X)) = \xi(Y,X)\,.
\end{align}
In particular, \eqref{Ineq:XiR2} holds if either \(\cLOMTP(Y|X)\) or \(\LOSD(Y,X)\).
\end{corollary}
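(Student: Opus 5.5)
The plan is to combine the representation \eqref{Rep:XiR2} with Theorem \ref{Thm:FootruleRho}, applied to the transformed pair \((Y,Y')\), and then to verify the two sufficient conditions via Corollary \ref{Cor:MP.DepProp.d2}.

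\emph{Step 1: reduce to copulas.} Since \((Y,X)\) is continuous, we may pass to the probability integral transforms \(U := F_Y(Y)\), \(V := F_X(X)\) and let \(C\) be the copula of \((Y,X)\). Because \(F_Y\) is continuous, \(U' := F_Y(Y')\) is a conditionally independent copy of \(U\) given \(V\). Equivalently, it is a conditionally independent copy given \(X\), since \(\sigma(V)\) and \(\sigma(X)\) agree up to null sets on the relevant events. The vector \((U,U')\) has distribution function \(C \ast C\). In particular \(Y'\) has the same continuous marginal as \(Y\), so \((Y,Y')\) is a continuous random vector whose copula is \(C \ast C\).

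\emph{Step 2: apply the footrule--rho comparison.} The properties \(\LTD\) and \(\RTI\) are invariant under strictly increasing marginal transforms, so \(\LTD(Y'|Y)\) and \(\RTI(Y'|Y)\) are properties of \(C \ast C\). Theorem \ref{Thm:FootruleRho}, applied with \((Z_1,Z_2) = (Y,Y')\), yields \(\rho_S(C\ast C) \geq \phi(C \ast C)\). By \eqref{Rep:XiR2} this is exactly \(R^2(F_Y(Y),F_X(X)) \geq \xi(F_Y(Y),F_X(X))\). The equality \(\xi(F_Y(Y),F_X(X)) = \xi(Y,X)\) is the scale invariance already stated in \eqref{Rep:XiR2}.

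\emph{Step 3: the sufficient conditions.} I will show that each hypothesis implies both \(\LTD(Y'|Y)\) and \(\RTI(Y'|Y)\).

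First suppose \(\cLOMTP(Y|X)\). Because \(Y'\) given \(X\) has the same conditional law as \(Y\), we also have \(\cLOMTP(Y'|X)\). Corollary \ref{Cor:MP.DepProp.d2}\eqref{Cor:MP.DepProp.d2:2}, applied with \(Z = X\), then gives \(\MTP(Y,Y')\). By Corollary \ref{Cor.LTD.RTI.POD}\eqref{Prop.MTP2.LTD.RTI:1}, \(\MTP(Y,Y')\) implies \(\LTD(Y,Y')\), and in particular \(\LTD(Y'|Y)\). The remark preceding Theorem \ref{Thm:FootruleRho} shows that \(\MTP\) also gives \(\RTI(Y'|Y)\).

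Next suppose \(\LOSD(Y,X)\). By equality of conditional laws, \(\LOSD(Y',X)\) holds as well. Corollary \ref{Cor:MP.DepProp.d2}\eqref{Cor:MP.DepProp.d2:5} then gives \(\LOSD(Y,Y')\), in particular \(\LOSD(Y'|Y)\). This implies \(\LTD(Y'|Y)\) by Theorem \ref{Thm.MTP2.SI}\eqref{Prop.SI.LTD.RTI:1}, since \(|A|=1\), and it implies \(\RTI(Y'|Y)\) by the cited remark.

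\emph{Main obstacle.} The delicate point is the bookkeeping in Step 1: one must confirm that replacing \(Y\) by \(F_Y(Y)\) commutes with forming the conditionally independent copy. One must also check that the transfer properties (\(\cLOMTP(Y'|X)\) and \(\LOSD(Y',X)\)) genuinely follow from equality of the conditional distributions. For the joint \(\LOSD(Y',X)\), the direction \(\SI(X|Y')\) needs the fact that \((Y',X) \eqd (Y,X)\). This holds because the joint law is determined by \(\bbP^X\) together with the kernel of \(Y\) given \(X\).
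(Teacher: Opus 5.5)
\textbf{Overall.} Your route is the one the paper has in mind. The corollary is stated without a separate proof, as a consequence of \eqref{Rep:XiR2}, Theorem~\ref{Thm:FootruleRho}, Corollary~\ref{Cor:MP.DepProp.d2}, and the remark that \(\RTI(Z_2|Z_1)\) follows from \(\MTP(Z_1,Z_2)\) or from \(\SI(Z_2|Z_1)\). Your Steps~1--2 are sound, and so is the \(\SI(Y,X)\) branch of Step~3: \(\SI(Y,Y')\) gives \(\SI(Y'|Y)\), hence both \(\LTD(Y'|Y)\) and \(\RTI(Y'|Y)\).

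\textbf{The gap.} It is the sentence in the \(\cMTP(Y|X)\) branch asserting that \(\MTP(Y,Y')\) gives \(\RTI(Y'|Y)\). Total positivity of the distribution function does yield \(\LTD\) in both directions. But \(\RTI\) is an upper-orthant property: it follows from \(\SI\) or from total positivity of the survival function, not from \(\MTP\), and this remains true even for Markov products \((Y,Y')\). To see why, take a two-component mixture \(\bbP(Y\le y\,|\,X)=G_B(y)\). Then \(\cMTP(Y|X)\) means that \(G_1/G_0\) is non-decreasing (reversed hazard rate order). By contrast, \(\RTI(Y'|Y)\) amounts to \((1-G_1)/(1-G_0)\) being non-decreasing (hazard rate order). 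These two orders are not comparable.

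\textbf{The conclusion itself fails.} The \(\cMTP(Y|X)\) half of the ``in particular'' clause is false, so this step cannot be repaired. Let \(X\sim U(0,1)\), let \(B\) be the indicator of \(\{X\ge 1/2\}\), and fix \(a\in(0,1/3)\). Set \(\bbP(Y\le y\,|\,X=x)=G_B(y)\), where \(G_0(y)=2y,\,2a,\,y\) and \(G_1(y)=0,\,2(y-a),\,y\) on \([0,a]\), \([a,2a]\), \([2a,1]\), respectively. Then \(Y\sim U(0,1)\) and \(G_1/G_0\) is non-decreasing. So \(\cMTP(Y|X)\) holds, and with it \(\MTP(Y,Y')\) and \(\LTD(Y'|Y)\).

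Now write \(d:=G_0-G_1\), a tent of height \(2a\) on \([0,2a]\). One computes \(\xi(Y,X)=\tfrac32\int_0^1 d^2\de\lambda=4a^3\) and \(R^2(Y,X)=3\bigl(\int_0^1 d\de\lambda\bigr)^2=12a^4\). Hence \(R^2<\xi\); for \(a=1/10\) this reads \(0.0012<0.004\). Consistently, \(\RTI(Y'|Y)\) fails, since \(\bbP(Y'>a\,|\,Y>a)=\frac{(1-2a)^2+1}{2(1-a)}>1-a=\bbP(Y'>a\,|\,Y>2a)\).

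\textbf{What survives.} Only the \(\SI(Y,X)\) case holds as stated. A \(\cMTP\)-type sufficient condition needs an additional upper-orthant assumption, for example that \((x,y)\mapsto\bbP(Y>y\,|\,X=x)\) is also \(\mtp\). Then \((y,y')\mapsto\bbP(Y>y,Y'>y')=\int\bbP(Y>y\,|\,X=x)\,\bbP(Y>y'\,|\,X=x)\de\bbP^X(x)\) is \(\mtp\) by the composition formula, and \(\RTI(Y'|Y)\) follows.
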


Note that \(\cLOMTP(Y|X)\) and \(\LOSD(Y,X)\) are not generally related since \(\cLOMTP(Y|X)\) implies \(\SI(Y|X)\) but it does not generally imply \(\SI(X|Y)\). For example, the random vector \((Y,X)\) with copula 
\begin{align}\label{Ex:XiR2}
  C(v,u) & := vu + v(1-v)^2\,u(1-u)
\end{align}
satisfies \(\cLOMTP(Y|X)\), and hence \(\SI(Y|X)\), but does not satisfy \(\SI(X|Y)\).
Consequently, Corollary \ref{Cor:XiR2} implies that the inequality 
\begin{align*}
  R^2(F_Y(Y),F_X(X)) & \geq \xi(F_Y(Y),F_X(X)) = \xi(Y,X)
\end{align*}
holds for every continuous pair of random variables \((Y,X)\) whose copula is \(C\) in \eqref{Ex:XiR2}, extreme-value, or Archimedean with generator \(\psi\) such that \(-\psi'\) is log-convex.

\section{Discussion}
\label{Sec:Dis}

\textit{Negative dependence concepts:}
\SF{Reversing inequality \eqref{MTP2} defining \(\mtp\) gives rise to the notion of \emph{multivariate reverse rule of order 2} (\(\mrr\)), introduced in \cite{karlinb_1980}.
Notable examples of \(\mrr\) functions include densities (probability mass functions) of the multinomial, multivariate hypergeometric, and Dirichlet distributions.
This notion motivates a negative-dependence analogue of \(\cLOMTP\). 
While the \(\mtp\) property is preserved under composition \cite[Proposition 3.4]{karlin_1980}, a fact central to our analysis as it underlies Lemma \ref{MTP:Help1} and related results in Section \ref{Sec:MTP2}, an analogous result for \(\mrr\) functions does not hold in general. 
Developing a meaningful negative-dependence analogue of \(\cLOMTP\) may therefore require a suitably strengthened version of \(\mrr\). We leave the formulation of such a notion and the study of its potential stability under Markov products for future research.}

\textit{Density-free dependence concepts:}
\SF{The classical definition of \(\dMTP\) requires the underlying distribution to admit a density with respect to a product measure (Remark \ref{Rem.cMTP2:Ex}).
However, this requirement is generally not satisfied by comonotonic random vectors, despite comonotonicity being regarded as the strongest possible form of positive dependence under virtually any reasonable notion of positive dependence. 
\cite{muller2006} therefore introduce a density-free extension of \(\dMTP\) for probability measures in terms of lattice inequalities for probabilities of Borel sets. 
Their definition agrees with the density-based one whenever a density with respect to a product measure exists \cite[Theorem 3.10.14]{muller2002} and has the advantage of being closed under weak convergence \cite[Theorem 2]{muller2006}.
A possible analogue for \(\cLOMTP\) can be defined as follows: 
For a partition \(A, B\) of \(\{1,2,\dots,d\}\), the random vector \(\bX\) is said to be \(\cLOMTPs(\bX_B|\bX_A)\) if 
\begin{align}\label{Def:cMTP2AB:Eq2}
   & \bbP( \bX_A \in C, \bX_B \in (-\infty, \bx_B]) \, \bbP( \bX_A \in D, \bX_B \in (-\infty, \by_B])
   \\
   & \leq \bbP( \bX_A \in C \vee D, \bX_B \in (-\infty, \bx_B \vee \by_B]) \, \bbP( \bX_A \in C \wedge D, \bX_B \in (-\infty, \bx_B \wedge \by_B]) \notag
\end{align} 
for all \(\bx_B, \by_B \in \bbR^{|B|}\) and all \(C,D \in \mathcal{B}(\bbR^{|A|})\) such that \(C \vee D, C \wedge D \in \mathcal{B}(\bbR^{|A|})\), where 
\begin{align*}
  C \vee D 
  & := \{\bx \vee \by \,:\, \bx \in C, \by \in D \}
  \\
  C \wedge D 
  & := \{\bx \wedge \by \,:\, \bx \in C, \by \in D \}\,.
\end{align*} 
Establishing the equivalence between \eqref{Def:cMTP2AB:Eq2} and \eqref{Def:cMTP2AB:Eq} when \(\bX_A\) has a density \(f_{\bX_A}\) with respect to the product measure \(\Motimes_{i \in A} \bbP^{X_i}\), requires adapting the argument underlying the corresponding equivalence in \cite[Theorem 3.10.14]{muller2002}. This adaptation, together with an analysis of closure properties under suitable modes of convergence, is left for future research. 
}


\section*{Acknowledgments}
This research was funded in whole by the Austrian Science Fund (FWF) [10.55776/P36155] project ReDim: Quantifying Dependence via Dimension Reduction.
The authors gratefully acknowledge the support of the WISS 2025 project 'IDA-lab Salzburg' (20204-WISS/225/197-2019 and 20102-F1901166-KZP).


\appendix

\renewcommand{\thetheorem}{\Alph{section}.\arabic{theorem}}
\renewcommand{\thelemma}{\thetheorem}
\renewcommand{\thecorollary}{\thetheorem}
\renewcommand{\theproposition}{\thetheorem}
\renewcommand{\theexample}{\thetheorem}
\renewcommand{\theremark}{\thetheorem}

\section{Additional results on the relation among positive dependence properties}\label{Sec:App}

Lemma \ref{MTP:Help1} is a variant of \cite[Proposition 3.4]{karlin_1980} and it is used on several occasions throughout the paper.

\begin{lemma} \label{MTP:Help1}
Let \(\mu\) be a product measure on the Borel $\sigma$-algebra of \(\bbR^k\) with $1 \leq k \leq d-1$ and let \(f: \bbR^d \to [0,\infty)\) be a measurable function that is \(\mtp\) and such that \(\bz \mapsto f(\bx,\bz)\) is \(\mu\)-integrable. Then
\begin{enumerate}[(1)]
\item\label{MTP:Help1a} the mapping \((\bx, \by) \mapsto \int_{(-\infty,\by]} f(\bx, \bz) \de \mu(\bz)\) is \(\mtp\).
\item\label{MTP:Help1b} the mapping \((\bx, \by) \mapsto \int_{(\by,\infty)} f(\bx, \bz) \de \mu(\bz)\) is \(\mtp\).
\end{enumerate}
\end{lemma}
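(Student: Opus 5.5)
The plan is to reduce both assertions to the composition theorem for $\mtp$ kernels, \cite[Proposition 3.4]{karlin_1980}. That result states that if $K(\bx,\bz)$ and $L(\bz,\by)$ are $\mtp$ and $\mu$ is a product measure, then $(\bx,\by)\mapsto\int K(\bx,\bz)L(\bz,\by)\de\mu(\bz)$ is $\mtp$.

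For \eqref{MTP:Help1a}, write $\by=(y_1,\dots,y_k)$ and
\begin{align*}
\int_{(-\infty,\by]} f(\bx,\bz)\de\mu(\bz)=\int f(\bx,\bz)\,\mathds{1}_{\{\bz\leq\by\}}\de\mu(\bz).
\end{align*}
The indicator factorizes as $\prod_{i=1}^k \mathds{1}_{\{z_i\leq y_i\}}$. Each factor $(z_i,y_i)\mapsto\mathds{1}_{\{z_i\le y_i\}}$ is totally positive of order two: its support $\{z\le y\}$ is a sublattice of $\bbR^2$, and an indicator of a sublattice is $\mtp$. A product of $\mtp$ functions on disjoint coordinate blocks is again $\mtp$ (\cite[Proposition 3.3]{karlin_1980}), so $L(\bz,\by)=\mathds{1}_{\{\bz\le\by\}}$ is $\mtp$ in $(\bz,\by)$. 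The kernel $K=f$ is $\mtp$ by assumption, and integrability ensures that the integrals are finite. Composition then gives that the map is $\mtp$ in $(\bx,\by)$.

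For \eqref{MTP:Help1b}, use $L(\bz,\by)=\prod_i\mathds{1}_{\{z_i>y_i\}}$ instead. Here the ordering matters: $\{(z,y):z>y\}$ is also a sublattice, since $z>y$ and $z'>y'$ imply $z\wedge z'>y\wedge y'$ and $z\vee z'>y\vee y'$. So each factor is $\mtp$ in $(z_i,y_i)$, and the same argument applies verbatim.

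The main obstacle is checking that Karlin--Rinott's composition proposition applies with a general product measure $\mu$ rather than Lebesgue measure, and with merely nonnegative (possibly zero) functions. Their proof goes through the basic four-function inequality (Ahlswede--Daykin / FKG type), which holds for $\sigma$-finite product measures. I would verify this by doing an induction over the $k$ coordinates of $\bz$ and integrating out one coordinate at a time, since $\mu$ is a product measure and Fubini applies. The one-dimensional step is the classical statement that $\int_{-\infty}^{y} g(\bx,z)\de\nu(z)$ preserves $\mtp$. This reduces to the two-variable basic composition formula for $\mathrm{TP}_2$ kernels, proved by splitting the double integral over $\{z<z'\}$ and $\{z>z'\}$ and symmetrizing; atoms of $\nu$ on the diagonal cause no difficulty because they contribute equally to both sides.

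Interpreting the integral over $(-\infty,\by]$ coordinatewise is consistent with this induction. Finally, for the component-wise lattice operations in the joint variables $(\bx,\by)$, note that the induction only ever integrates out a coordinate while keeping the $\mtp$ property in all remaining variables. This is exactly the closure statement \cite[Proposition 3.4]{karlin_1980}, whose proof I would cite rather than reproduce.
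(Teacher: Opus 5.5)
Your proposal is correct and follows the paper's proof. The paper also writes the integral as $\int f(\bx,\bz)\,h(\bz,\by)\de\mu(\bz)$ with $h(\bz,\by)=\mathds{1}_{(-\infty,\by]}(\bz)$ (and uses the analogous kernel for the second part), then invokes \cite[Proposition 3.4]{karlin_1980}; your sublattice argument supplies the check that the indicator kernels are $\mtp$, which the paper only asserts. One small correction to your side remark: diagonal terms do not contribute equally to both sides; they are dominated pointwise by the $\mtp$ hypothesis applied with $z=z'$, and that is all the argument needs.
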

\begin{proof}
According to \cite[Proposition 3.4]{karlin_1980}, the mapping 
\begin{align*}
  (\bx, \by) \mapsto \int_{\bbR^k} f(\bx, \bz) \; h(\bz, \by) \de \mu(\bz)\,,
\end{align*}
where \(h(\bz, \by): = \mathds{1}_{(-\infty,\by]}(\bz)\), is \(\mtp\) since \(h\) is \(\mtp\). \eqref{MTP:Help1b} follows by a similar reasoning.
\end{proof}


We shall also need the following result on log-convexity.

\begin{lemma} \label{lem:logconvex}
For a continuous function \(f: [0,\infty) \to (0,\infty)\), the following statements are equivalent:
\begin{enumerate}[{\rm (1)}]
    \item\label{lem:logconvex.1} \(f\) is log-convex.
    \item\label{lem:logconvex.2} The mapping \(x \mapsto \frac{f(x+b)}{f(x+a)}\) is non-decreasing for all \(a,b \in [0,\infty)\) with \(a < b\). 
    \item\label{lem:logconvex.3} The mapping \(x \mapsto \frac{f(x+b)}{f(x)}\) is non-decreasing for all \(b \in (0,\infty)\).
\end{enumerate}
\end{lemma}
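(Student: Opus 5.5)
We prove the cycle \eqref{lem:logconvex.1} \(\Rightarrow\) \eqref{lem:logconvex.2} \(\Rightarrow\) \eqref{lem:logconvex.3} \(\Rightarrow\) \eqref{lem:logconvex.1}. Throughout, set \(g := \log \circ f\). Since \(f\) is continuous and strictly positive, \(g\) is a continuous real-valued function on \([0,\infty)\). Each of the three statements is invariant under passing to \(g\). Condition \eqref{lem:logconvex.2} says that \(x \mapsto g(x+b) - g(x+a)\) is non-decreasing for all \(0 \le a < b\). Condition \eqref{lem:logconvex.3} is the special case \(a=0\), so \eqref{lem:logconvex.2} \(\Rightarrow\) \eqref{lem:logconvex.3} is immediate.

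For \eqref{lem:logconvex.1} \(\Rightarrow\) \eqref{lem:logconvex.2}, assume \(g\) is convex on \((0,\infty)\); by continuity it is then convex on \([0,\infty)\). Fix \(0 \le a < b\) and \(x \le y\). We need
\[
g(x+b) + g(y+a) \le g(x+a) + g(y+b).
\]
Consider the four points \(p_1 = x+a\), \(p_2 = x+b\), \(p_3 = y+a\), \(p_4 = y+b\). The smallest is \(p_1\) and the largest is \(p_4\), since \(p_1 \le \min\{p_2,p_3\} \le \max\{p_2,p_3\} \le p_4\), and moreover \(p_1 + p_4 = p_2 + p_3\). This is exactly the configuration already used in the proof of Theorem \ref{Cor:DepProp.ArchC.cMTP2}: by the majorization inequality for convex functions \cite[Corollary 1.4.3]{niculescu2004}, or directly by the monotonicity of difference quotients of convex functions, we obtain \(g(p_2)+g(p_3) \le g(p_1)+g(p_4)\). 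Exponentiating gives the desired monotonicity of the ratio.

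For \eqref{lem:logconvex.3} \(\Rightarrow\) \eqref{lem:logconvex.1}, which is the main step, the hypothesis says that \(g(x+b) - g(x) \le g(y+b) - g(y)\) for all \(x \le y\) and \(b > 0\). Taking \(y = x+b\) yields
\[
2g(x+b) \le g(x) + g(x+2b),
\]
that is, \(g\) is midpoint convex on \([0,\infty)\), since any two points \(s<t\) can be written as \(s = x\), \(t = x+2b\). A continuous midpoint-convex function is convex. We make this step explicit via dyadic rationals: induction gives \(g(\lambda s + (1-\lambda)t) \le \lambda g(s) + (1-\lambda) g(t)\) for all dyadic \(\lambda \in [0,1]\), and continuity of \(g\) extends the inequality to all \(\lambda \in [0,1]\). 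Hence \(g\) is convex, and in particular convex on \((0,\infty)\), which is log-convexity of \(f\).

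The only point requiring care is this final passage from midpoint convexity to convexity, where continuity of \(f\) (and hence of \(g\)) is essential. Everything else is elementary rearrangement.
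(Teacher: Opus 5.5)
Your proof is correct and takes essentially the same route as the paper. Both arguments pass to $g=\log\circ f$ and reduce (2) and (3) to monotonicity of the increments $x\mapsto g(x+h)-g(x)$. Both then use the fact that, for continuous $g$, this monotonicity is equivalent to convexity via the midpoint inequality $g(x)-g(x-h)\le g(x+h)-g(x)$. The only difference is that you prove the two needed steps yourself — the four-point inequality for (1)$\Rightarrow$(2) and the dyadic midpoint-to-convexity argument for (3)$\Rightarrow$(1) — whereas the paper takes them from \cite{niculescu2004} and \cite{schmidt2017}.
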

\begin{proof}
We first show that \eqref{lem:logconvex.1} and \eqref{lem:logconvex.2} are equivalent.
Note that, for \(a,b \in [0,\infty)\) with \(a < b\), condition \eqref{lem:logconvex.2} is equivalent to the mapping 
\begin{align*}
  x 
  & \mapsto \log \left( \frac{f(x+b)}{f(x+a)} \right) / (b-a)
        =   \frac{(\log \circ f)(x+b) - (\log \circ f)(x+a)}{b-a}
\end{align*} 
being non-decreasing. 
So it remains to show that \(g := \log \circ f\) is convex if and only if 
\begin{align*}
  x 
  & \mapsto \frac{g(x+h) - g(x)}{h}
\end{align*} 
is non-decreasing for every \(h > 0\). But the latter equivalence follows from the fact that \(g\) is convex if and only if 
\begin{align*}
  g(x) - g(x-h)
  & \leq g(x+h) - g(x)
\end{align*} 
for all \(x \in [0,\infty)\) and all \(0 < h \leq x\); cf. 
\cite[Theorem 1.3.1]{niculescu2004} or \cite[Lemma 6]{schmidt2017}.
Finally, clearly \eqref{lem:logconvex.2} and \eqref{lem:logconvex.3} are equivalent.
\end{proof}

We conclude with a modest extension of Example \ref{Ex.Binary} to trivariate binary distributions.

\begin{example}[Binary distributions, \(d=3\)] \label{Ex.Binary.d=3} 
Suppose \(\bX = (X_1, X_2,X_3)\) is a \(3\)-dimensional binary random vector taking on values in \(\{0,1\}^{3}\) with \(\bbP(X_i = 0) = 1 - \bbP(X_i = 1) \in (0,1)\), \(i \in \{1,2,3\}\). Then 
\begin{align*}
\begin{array}{ccccccc}
&&&&& \Nearrow & \LOMTP(\bX) 
\\
\dMTP(\bX) & 
\Longleftrightarrow & \cLOsMTP(\bX) & 
\Longleftrightarrow & \underset{|A|=1}{\cLOMTP(\bX_B|\bX_A)} & 
\\
&&&&& \Searrow & \underset{|A|=1}{\LOSD(\bX_B|\bX_A)}
\end{array}  
\end{align*}
with the reverse implications being invalid.
We verify the result in several steps.\pagebreak
\begin{enumerate}[{\rm (1)}]
\item\label{Ex.Binary:1} 
According to \cite[Section 4.2]{fallat_total_2017}, \(\dMTP(\bX)\) holds if and only if the following nine inequalities hold:
\vspace{-5mm}
\begin{subequations} %
\begin{center}
\begin{minipage}{0.3\linewidth}
\begin{equation} \label{ex:bina_dMTP_a}
p_{011} p_{000} \geq p_{010} p_{001}
\end{equation}
\end{minipage}\hfill
\begin{minipage}{0.3\linewidth}
\begin{equation}\label{ex:bina_dMTP_b}
p_{101} p_{000} \geq p_{100} p_{001}
\end{equation}
\end{minipage}\hfill
\begin{minipage}{0.3\linewidth}
\begin{equation}\label{ex:bina_dMTP_c}
p_{110} p_{000} \geq p_{100} p_{010}
\end{equation}
\end{minipage}

\vspace{0mm} %

\begin{minipage}{0.3\linewidth}
\begin{equation}\label{ex:bina_dMTP_d}
p_{111} p_{100} \geq p_{110} p_{101}
\end{equation}
\end{minipage}\hfill
\begin{minipage}{0.3\linewidth}
\begin{equation}\label{ex:bina_dMTP_e}
p_{111} p_{010} \geq p_{110} p_{011}
\end{equation}
\end{minipage}\hfill
\begin{minipage}{0.3\linewidth}
\begin{equation}\label{ex:bina_dMTP_f}
p_{111} p_{001} \geq p_{101} p_{011}
\end{equation}
\end{minipage}

\vspace{0mm} 

\begin{minipage}{0.3\linewidth}
\begin{equation}\label{ex:bina_dMTP_g}
p_{111} p_{000} \geq p_{100} p_{011}
\end{equation}
\end{minipage}\hfill
\begin{minipage}{0.3\linewidth}
\begin{equation}\label{ex:bina_dMTP_h}
p_{111} p_{000} \geq p_{010} p_{101}
\end{equation}
\end{minipage}\hfill
\begin{minipage}{0.3\linewidth}
\begin{equation}\label{ex:bina_dMTP_i}
p_{111} p_{000} \geq p_{001} p_{110}
\end{equation}
\end{minipage}
\end{center}
\end{subequations}

\item\label{Ex.Binary:2} 
We now study \(\cLOMTP(\bX_B|\bX_A)\) with \(|A|=1\). 
To verify the stated equivalences, we begin by illustrating the main idea in the case \(A = \{1\}\) and \(B = \{2,3\}\).
Therefore, define 
\(\hat{p}_{ijk} := \bbP(X_2 \leq j, X_3 \leq k \mid X_1=i)\), \(i,j,k \in \{0,1\}\).
Then we have 
\begin{align*}
\begin{array}{llll}
\hat{p}_{000} = \frac{p_{000}}{p_{000} + p_{010} + p_{001} + p_{011}} & 
\hat{p}_{010} = \frac{p_{000} + p_{010}}{p_{000} + p_{010} + p_{001} + p_{011}} &
\hat{p}_{001} = \frac{p_{000}+p_{001}}{p_{000} + p_{010} + p_{001} + p_{011}} & 
\hat{p}_{011} = 
1
\\ \\
\hat{p}_{100} = \frac{p_{100}}{p_{100} + p_{110} + p_{101} + p_{111}} &
\hat{p}_{110} = \frac{p_{100} + p_{110}}{p_{100} + p_{110} + p_{101} + p_{111}} &
\hat{p}_{101} = \frac{p_{100} + p_{101}}{p_{100} + p_{110} + p_{101} + p_{111}} & 
\hat{p}_{111} = 
1\,,
\end{array}
\end{align*}
and \(\cLOMTP(\bX_B|\bX_A)\) if and only if the following nine inequalities hold:
\vspace{-5mm}
\begin{subequations} 
\begin{center}
\begin{minipage}{0.3\linewidth}
\begin{equation} \label{ex:bina_cMTP_a}
\hat{p}_{011} \hat{p}_{000} \geq \hat{p}_{010} \hat{p}_{001}
\end{equation}
\end{minipage}\hfill
\begin{minipage}{0.3\linewidth}
\begin{equation}\label{ex:bina_cMTP_b}
\hat{p}_{101} \hat{p}_{000} \geq \hat{p}_{100} \hat{p}_{001}
\end{equation}
\end{minipage}\hfill
\begin{minipage}{0.3\linewidth}
\begin{equation}\label{ex:bina_cMTP_c}
\hat{p}_{110} \hat{p}_{000} \geq \hat{p}_{100} \hat{p}_{010}
\end{equation}
\end{minipage}

\vspace{0mm} 

\begin{minipage}{0.3\linewidth}
\begin{equation}\label{ex:bina_cMTP_d}
\hat{p}_{111} \hat{p}_{100} \geq \hat{p}_{110} \hat{p}_{101}
\end{equation}
\end{minipage}\hfill
\begin{minipage}{0.3\linewidth}
\begin{equation}\label{ex:bina_cMTP_e}
\hat{p}_{111} \hat{p}_{010} \geq \hat{p}_{110} \hat{p}_{011}
\end{equation}
\end{minipage}\hfill
\begin{minipage}{0.3\linewidth}
\begin{equation}\label{ex:bina_cMTP_f}
\hat{p}_{111} \hat{p}_{001} \geq \hat{p}_{101} \hat{p}_{011}
\end{equation}
\end{minipage}

\vspace{0mm} 

\begin{minipage}{0.3\linewidth}
\begin{equation}\label{ex:bina_cMTP_g}
\hat{p}_{111} \hat{p}_{000} \geq \hat{p}_{100} \hat{p}_{011}
\end{equation}
\end{minipage}\hfill
\begin{minipage}{0.3\linewidth}
\begin{equation}\label{ex:bina_cMTP_h}
\hat{p}_{111} \hat{p}_{000} \geq \hat{p}_{010} \hat{p}_{101}
\end{equation}
\end{minipage}\hfill
\begin{minipage}{0.3\linewidth}
\begin{equation}\label{ex:bina_cMTP_i}
\hat{p}_{111} \hat{p}_{000} \geq \hat{p}_{001} \hat{p}_{110}
\end{equation}
\end{minipage}
\end{center}
\end{subequations}

\smallskip
A direct calculation gives
\begin{align*}
\begin{array}{rrr}
\eqref{ex:bina_dMTP_a} \Longleftrightarrow \eqref{ex:bina_cMTP_a} &
\eqref{ex:bina_dMTP_b} \Longleftrightarrow \eqref{ex:bina_cMTP_b} & 
\eqref{ex:bina_dMTP_c} \Longleftrightarrow \eqref{ex:bina_cMTP_c}
\\ \\
\eqref{ex:bina_dMTP_d} \Longleftrightarrow \eqref{ex:bina_cMTP_d} & &
\\ \\
\eqref{ex:bina_dMTP_b}, \eqref{ex:bina_dMTP_c}, \eqref{ex:bina_dMTP_g} \Longrightarrow \eqref{ex:bina_cMTP_g} &
\eqref{ex:bina_dMTP_c}, \eqref{ex:bina_dMTP_h} \Longrightarrow \eqref{ex:bina_cMTP_h} &
\eqref{ex:bina_dMTP_b}, \eqref{ex:bina_dMTP_i} \Longrightarrow \eqref{ex:bina_cMTP_i}
\end{array}
\end{align*}
Moreover, \eqref{ex:bina_cMTP_e} is equivalent to 
\begin{equation} \label{eq:bina}
\bbP(X_1 = 1, X_3 = 1) \, \bbP(X_1 = 0, X_3 = 0) \geq 
\bbP(X_1 = 1, X_3 = 0) \, \bbP(X_1 = 0, X_3 = 1)\,,
\end{equation}
and \eqref{ex:bina_cMTP_f} is equivalent to 
\begin{equation} \label{eq:binb}
\bbP(X_1 = 1, X_2 = 1) \, \bbP(X_1 = 0, X_2 = 0) \geq 
\bbP(X_1 = 1, X_2 = 0) \, \bbP(X_1 = 0, X_2 = 1)\,.
\end{equation}
Since \(\dMTP(\bX)\) is closed under marginalization (Remark \ref{ClosedMargins.2}), it follows that \eqref{eq:bina} and \eqref{eq:binb} hold.
Consequently, \(\dMTP(\bX)\) implies \(\cLOMTP(\bX_B|\bX_A)\) with \(A = \{1\}\) and \(B = \{2,3\}\). This confirms Theorem \ref{Thm.MTP2}\eqref{Thm.MTP2:3}.

\item \label{Ex.Binary:3} 
We now verify that \(\cLOMTP(\bX_B|\bX_A)\) with \(|A|=1\) implies \(\dMTP(\bX)\) using the notation introduced in part \eqref{Ex.Binary:2}. 
The stated equivalences then follow from Theorems \ref{cMTP2:Seq} and \ref{Thm.MTP2}.
\\
First, assume that \(p_{001}, p_{010}, p_{100} > 0\), and note that in this case  \(\cLOMTP(\bX_B|\bX_A)\) with \(|A|=1\) if and only if the inequalities \eqref{ex:bina_cMTP_a}-\eqref{ex:bina_cMTP_i} are satisfied, together with 18 additional inequalities: nine corresponding to \(A = \{2\}\) and \(B = \{1,3\}\), and nine corresponding to \(A = \{3\}\) and \(B = \{1,2\}\).
Similarly to the argument in \eqref{Ex.Binary:2} this implies \eqref{ex:bina_dMTP_a}-\eqref{ex:bina_dMTP_f}, and \eqref{ex:bina_dMTP_g}-\eqref{ex:bina_dMTP_i} follow from \eqref{ex:bina_dMTP_a}-\eqref{ex:bina_dMTP_f} using \(p_{001}, p_{010}, p_{100} > 0\); cf.~\cite[Proposition 2.1]{karlin_1980}.
\\
Now, assume that \(p_{001}=0\) and \(p_{010}, p_{100} > 0\). 
Then \eqref{ex:bina_dMTP_a}, \eqref{ex:bina_dMTP_b} and \eqref{ex:bina_dMTP_i} trivially hold and it remains to show that  \(\cLOMTP(\bX_B|\bX_A)\) with \(|A|=1\) implies \eqref{ex:bina_dMTP_c} - \eqref{ex:bina_dMTP_h}.
Like in \eqref{Ex.Binary:2}, \eqref{ex:bina_dMTP_c} - \eqref{ex:bina_dMTP_f} follow directly from the  \(\cLOMTP(\bX_B|\bX_A)\) inequalities, while \eqref{ex:bina_dMTP_g} and \eqref{ex:bina_dMTP_h} are due to \eqref{ex:bina_dMTP_c} - \eqref{ex:bina_dMTP_e} using \(p_{010}, p_{100} > 0\); cf.~\cite[Proposition 2.1]{karlin_1980}.
The other two cases with only one of \(p_{010}\) or \(p_{100}\) equal to zero are analogous.
\\
Further, assume that \(p_{001} = 0 = p_{010}\) and \(p_{100} > 0\). 
Then \eqref{ex:bina_dMTP_a} - \eqref{ex:bina_dMTP_c} and \eqref{ex:bina_dMTP_h} - \eqref{ex:bina_dMTP_i} trivially hold and it remains to show that  \(\cLOMTP(\bX_B|\bX_A)\) with \(|A|=1\) implies \eqref{ex:bina_dMTP_d} - \eqref{ex:bina_dMTP_g}.
Like in \eqref{Ex.Binary:2}, \eqref{ex:bina_dMTP_d} - \eqref{ex:bina_dMTP_f} follow directly from the  \(\cLOMTP(\bX_B|\bX_A)\) inequalities, while \eqref{ex:bina_dMTP_g} is due to 
\eqref{ex:bina_cMTP_g} using a distinction of cases.
The other two settings with only one of \(p_{010}\) or \(p_{100}\) strictly positive are analogous.
\\
Finally, assume that \(p_{001} = p_{010} = p_{100} = 0\).
Then \eqref{ex:bina_dMTP_a} - \eqref{ex:bina_dMTP_c} and \eqref{ex:bina_dMTP_g} - \eqref{ex:bina_dMTP_i} trivially hold and it remains to show that  \(\cLOMTP(\bX_B|\bX_A)\) with \(|A|=1\) implies \eqref{ex:bina_dMTP_d} - \eqref{ex:bina_dMTP_f} which is immediate.

\item 
That the reverse implication involving \(\MTP(\bX)\) is invalid follows from Example \ref{Ex.Binary}.

\item 
Finally, we present a counterexample that is \(\LOSD(\bX_B|\bX_A)\) with \(|A|=1\) but not \(\cLOMTP(\bX_B|\bX_A)\) with \(|A|=1\): 
Suppose \(\bX \in \{0,1\}^3\) with probabilities of occurrence
\begin{center}
$\displaystyle \begin{array}{ c|c c c c c c c c }
\bx = (x_{1},x_{2},x_{3}) & ( 0,0,0) & ( 1,0,0) & ( 0,1,0) & ( 0,0,1) & ( 1,1,0) & ( 1,0,1) & ( 0,1,1) & ( 1,1,1)\\
\hline
\bbP(\bX = \bx) & 0.35 & 0.15 & 0.15 & 0.15 & 0.05 & 0.05 & 0.05 & 0.05
\end{array}$
\end{center}
Then it is straightforward to verify \(\LOSD(\bX_B|\bX_A)\), but \eqref{ex:bina_cMTP_a} and hence \(\cLOMTP(\bX_B|\bX_A)\) fails. \\

\end{enumerate}
\end{example}
\end{document}